\documentclass[journal]{IEEEtran}
\IEEEoverridecommandlockouts
\usepackage{psfrag}
\usepackage{dsfont}
\usepackage{subcaption}
\usepackage{amsmath, amsthm, amssymb, amsfonts}
\usepackage{graphicx}

\usepackage{algorithm}
\usepackage{algorithmic}
\usepackage{epstopdf}
\newcommand{\bd}{\mathbf}
\newcommand{\xmi}{\bd{x_{-i}}}
\newcommand{\E}{\mathbb{E}}
\newcommand{\ymax}{y_{\max}}
\newcommand{\ox}{\overline{x}}
\newcommand{\opi}{\overline{\pi}}
\newcommand{\mS}{\mathcal{S}}
\newcommand{\Del}{\Delta}
\newcommand{\newbz}{}
\newcommand{\newrj}{}
\newcommand{\hymax}{\hat{y}_{\max}}
\DeclareMathOperator{\Cov}{cov}

\newtheorem{assumption}{Assumption}
\newtheorem{proposition}{Proposition}
\newtheorem{remark}{Remark}
\newtheorem{lemma}{Lemma}
\newtheorem{corollary}{Corollary}
\newtheorem{definition}{Definition}
\newtheorem{theorem}{Theorem}
\newtheorem{example}{Example}
\begin{document}

\title{Competition and Efficiency of Coalitions in Cournot Games with Uncertainty }

\author{Baosen Zhang,~\IEEEmembership{Member,~IEEE,}
        Ramesh Johari,~\IEEEmembership{Member,~IEEE,}
        Ram Rajagopal,~\IEEEmembership{Member,~IEEE,}
\thanks{B. Zhang is with the Department of Electrical Engineering at the University of Washington, email: zhangbao@uw.edu.}
\thanks{R. Johari is with the Department of Management \& Science and Engineering at Stanford University, email: ramesh.johari@stanford.edu.}%
\thanks{R. Rajagopal is with the Department of Civil and Environmental Engineering at Stanford University, email: ramr@stanford.edu.}
}

\maketitle
\begin{abstract}
We investigate the impact of coalition formation on the efficiency of Cournot games where producers face uncertainties. In particular, we study a market model where firms must determine their output before an uncertain production capacity is realized. In contrast to standard Cournot models, we show that the game is not efficient when there are many small firms. Instead, producers tend to act conservatively to hedge against their risks. We show that in the presence of uncertainty, the game becomes efficient when firms are allowed to take advantage of diversity to form groups of certain sizes. We characterize the tradeoff between market power and uncertainty reduction as a function of group size.  \newrj{In particular, we compare the welfare and output obtained with coalitional competition, with the same benchmarks when output is controlled by a single system operator.}
We show when there are $N$ firms present, competition between groups of size $\Omega(\sqrt{N})$ results in equilibria that are socially optimal in terms of welfare and groups of size $\Omega(N^{2/3})$ are socially optimal in terms of production. We also extend our results to the case of uncertain demand by establishing an equivalency between Cournot oligopoly and Cournot Oligopsony. We demonstrate our results with real data from electricity markets with significant wind power penetration.
\end{abstract}%


\begin{IEEEkeywords}
  Cournot Games, Decision Making under Uncertainty, Efficiency, Coalitions, Electricity Markets
\end{IEEEkeywords}


\section{Introduction}

{\em Cournot games} are among the most extensively studied models for oligopolistic competition among multiple firms.  A Cournot oligopoly is a model where participants compete with each other by controlling the amount of a homogeneous good that they produce.  A market price is determined as a function of the total output of the firms.  The profit of a firm is then the product of the market price and their output quantity, less any costs incurred.  The producers are assumed to act strategically and rationally to maximize their individual profits.   This model was initially studied by \cite{Cournot}; for surveys of such models, see, e.g., \cite{Shaprio89,Friedman77,Daughety88}.

In this paper, we consider Cournot competition among firms who face {\em production uncertainty}.  In the model we consider, firms first commit to an expected level of output; subsequently, actual production is realized, drawn from a distribution parameterized by the firm's {\em ex ante} chosen level.  Any shortfall from the precommitted level incurs a penalty.  Such a model captures production decisions by firms in environments where commitments must be made before all relevant factors influencing production are known.

Electricity markets serve as one motivating example of such an environment.  In electricity markets, producers submit their bids before the targeted time of delivery (e.g., one day ahead). However, renewable resources such as wind and solar have significant uncertainty (even on a day-ahead timescale).  As a result, producers face uncertainties about their actual production capacity at the commitment stage.

Our paper focuses on a fundamental tradeoff revealed in such games.  On one hand, in the classical Cournot model, efficiency obtains as the number of individual firms approaches infinity, as this weakens each firm's {\em market power} (ability to influence the market price through their production choice).  On the other hand, this result does not carryover when production uncertainty is present: firms protect themselves against the risk of being unable to meet the prior commitment by under-producing relative to the efficient level.

In considering how to recover efficient performance, we are naturally led to think of {\em coalitions} of firms.  Informally, if firms pool together, they can mitigate individual uncertainty any one of them may perceive in future production (a law of large numbers effect).  Of course, coalitions are not without their downside: coalitions possess greater market power than individual firms.  Indeed, this concern is substantial, as coalitions must be of fairly substantial size to mitigate the adverse effects of production uncertainty.  In the context of electricity markets, regulators must grapple with the consequences of allowing wind generators and other renewable resources to form coalitions as they bid into the market.  As a result we are led to a fundamental question: how many coalitions should be allowed to form, and of what size, if the regulator is interesting in maximizing overall market efficiency?

We characterize this tradeoff by studying the efficiency of Cournot competition when producers are allowed to form coalitions.  Our main contributions are as follows.  {\em First}, in the model described above, we characterize equilibrium among competing coalitions, as well as the socially optimal benchmark.  {\em Second}, as measures of efficiency, we compare both the welfare and production output of the firms under Cournot competition with the socially optimal welfare and output.  We characterize an optimal scaling regime for coalition structure (in the limit of many firms) under which the efficiency losses can be made arbitrarily small.  That is, there exist coalitions (partition of producers) that achieve essentially efficient reduction in uncertainty, but have no appreciable market power.  We also characterize the rate at which efficiency loss vanishes, and establish these results when firms may have correlated uncertainty.  Finally, by establishing an equivalence between Cournot oligopoly and Cournot oligopsony, we demonstrate our results are applicable to settings with demand uncertainty as well; \newrj{in particular, we illustrate the latter results in an application to urban parking allocation.}

Efficiency and welfare loss in Cournot games have been studied extensively in various contexts.  Early empirical analysis of welfare loss  was performed by \cite{Harberger54} and \cite{Bergson73}. Analytically, at the limit where many firms compete, many authors showed that a \emph{competitive limit} exists \cite{Frank63,Ruffin71,Haurie85,Novshek78}. The quantification of such a limit was considered by \cite{Anderson03} where the marginal costs of the firms are assumed to be constant. The work of \cite{Johari05} showed that for $N$ producers with the same cost function competing for a resource with a differentiable demand curve, the efficiency loss is no more than $1/(2N+1)$ when the producers are strategic and price anticipatory. The paper by \cite{Tsitsiklis13} derived a more general bound for convex demand curves and \cite{Guo05} studies the how the loss can be estimated in practice. The loss under asymmetric firms was studied by \cite{Corchon08}.

Most of the preceding literature concludes that full efficiency is achieved when a large number of producers are competing against each other. In this paper, we show that this is not the case when production uncertainty plays a role in the firms' profits.  Profit maximization under supply (or demand) uncertainty falls under the well studied {\em newsvendor problem} in the operations literature. However, most of previous work on this area is concerned with a single retailer \cite{Stevenson09}.  Oligopolistic competition is studied in \cite{Yao08} for additive demand, and in \cite{Adida11} for multiplicative demand; a related model with revenue sharing between different firms is discussed in \cite{Dana01}. To our knowledge, none of the previous works in this area consider efficient coalition formation. Another related research is area is contract designs~(see, e.g.  \cite{McAfee91}), where designers impose penalties to ensure that firms operate as expected. In this paper, the penalty is from uncertainties that are intrinsic to the problem.

One closely related work to our own is \cite{Yosha93}, where the author studies the role of intermediaries between diversification and competition in a large economy. Their results are derived under the assumption of a common randomness affecting all consumers, whereas in our work each producer faces its own randomness (possibly correlated with others).  This latter effect is what creates efficiency gains by allowing coalitions to form.

The remainder of the paper is organized as follows. Section \ref{sec:model} presents our basic model of Cournot competition with two stages: production commitments are made in the first stage, and actual available production capacity is realized in the second stage.  The market price is determined in the first stage based on production commitments.  In the second stage, the producer is charged a penalty if capacity is short of the firm's commitment level.  We then study the same model assuming firms act in coalitions.  As the focus of the paper is on the competition between groups, we do not study the nature of revenue sharing contracts within a group; see, e.g., \cite{Telser94} and \cite{Shapley67}.

In Section \ref{sec:iid}, we begin by studying the case where firms face i.i.d.~production uncertainty.  We first show that as the number of firms $N$ grows, the efficiency loss does not vanish (due to the adverse effects of production uncertainty).  We also show that the other extreme, a grand coalition of all producers, is inefficient (due to excessive exercise of market power).   We then study coalitional competition: in particular, we characterize the optimal group size and the optimal rate at which the efficiency loss approaches zero. By balancing the adverse effects of market power with the benefits of reduction in production uncertainty, we show that a coalition size of $\sqrt{N}$ producers (so $\sqrt{N}$ coalitions compete in the market) is optimal, and the efficiency loss is no more than $O(1/\sqrt{N})$. We show that same results hold under two models of correlated production uncertainty in Section \ref{sec:corr}.

Section \ref{sec:oligopsony} shows how the results can be applied to demand uncertainty by deriving an equivalence between Cournot oligopoly (firms competing to supply a good) and Cournot oligopsony (firms competing to consume a good). Section \ref{sec:power} illustrates how the results can be applied in practice with a case study of the electricity market in the Mid-Atlantic region of the United States when wind power producers are included in the market; this section is based on material in \cite{Zhang15}. Section \ref{sec:con} concludes the paper.

\section{Technical Preliminaries} \label{sec:model}
In this section, we define a two-stage game where multiple firms compete to satisfy the demand for a single resource. The main difference between the two-stage model and classical Cournot competition is that the bidding occurs in the first stage, but each producer has an uncertain production capacity at the time of delivery (second stage).

Suppose there are $N$ firms. The market operates in two stages as shown in Fig. \ref{fig:two-stage}.
\begin{figure}[ht]
\centering
\includegraphics[scale=0.8]{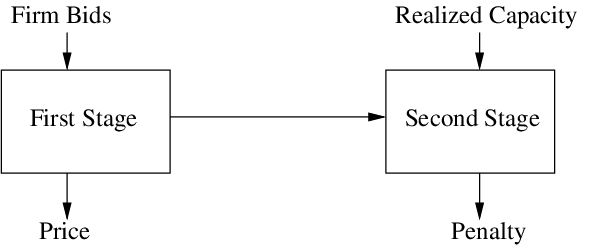}
\caption{The two-stage market model. Firms commit their production in the first stage, and this determines the price of the good. Capacities are realized in the second stage, and penalties are assessed if a firm's commitment is less than its realized capacity.}
\label{fig:two-stage}
\end{figure}
At the first stage, firm $i$ chooses a committed production quantity $x_i$ as its bid into the market. Let $p(y)$ denote the market price when $y$ units of aggregate output is committed. Let $X_i$ denote the capacity constraint on firm $i$'s production. Note $X_i$ is a random variable at the first stage, and is realized in the second stage. Throughout the paper, we will assume $X_i$'s are {\em continuous}, i.e., they follow a distribution with a continuous probability density function.  To focus on the effect of production uncertainty, we assume that each firm does not have a cost for producing the resource.\footnote{Our results would remain unchanged if each firm faced the same constant marginal production cost.}

If the promised amount ($x_i$) is larger than the capacity ($X_i$) firm $i$ is penalized by a cost $q$ per unit short fall.  Thus the cost of shortfall for firm $i$ is  $q(x_i-X_i)^+$.\footnote{For a real number $z$, let $z^+$ denote the positive part of $z$, i.e., $z^+ = z$ for $z> 0$, and $0$ otherwise.}  Without loss of generality, we set $q = 1$ for the remainder of the paper.

The assumption of a penalty linear in the shortfall is relatively common in the literature on newsvendor problems.  This penalty allows us to capture the {\em risk} associated with a shortfall, i.e., promising more than what can actually be delivered; indeed, the penalty serves to make a firm risk averse in its choice of commitment level.  Our main results continue to hold even for penalties of the form $\E[f(x_i - X_i)^+]$, as long as $f$ is convex, increasing, and has bounded derivative.

The structure of the penalty makes two important assumptions: first, that the penalty depends only on a firm's {\em own} shortfall (i.e., no inter-firm externalities); and that any excess production capacity cannot be resold in a secondary market.  In practice, these assumptions may be violated.  For example, in electricity markets, a real-time market is run to balance the realized supply and demand, and the study of such markets remain an important future direction for us.  Nevertheless, we believe our model captures the first order effects of production uncertainty on firm behavior, and on the role coalitions play in achieving efficient outcomes.  


We use the notation $\xmi$ to denote the quantities of chosen by all firms except $i$; that is, $\xmi=(x_1,x_2,\dots,x_{i-1},x_{i+1},\dots,x_N)$. The expected profit for firm $i$ is
\begin{equation} \label{eqn:pi}
\pi_i(x_i,\xmi)=p\left(\sum_{l=1}^N x_l\right) x_i- \E[(x_i-X_i)^+].
\end{equation}

When each firm is price anticipatory, given $\xmi$, firm $i$ chooses $x_i>0$ to maximize $\pi_i$. A \emph{Nash equilibrium} of the game defined by $(\pi_1,\dots,\pi_N)$ is a vector $\bd x \geq 0$ such that for all $i$:
\begin{equation} \label{eqn:nash}
\pi_i(x_i,\xmi) \geq \pi_i(\tilde{x_i},\xmi), \text{   for all } \tilde{x_i} \geq 0.
\end{equation}
To analyze the Nash equilibrium for this game, we make the following assumptions on the price function $p$; this assumption remains in force for the entire paper.
\begin{assumption} \label{assump:p}
We assume that:
\begin{enumerate}
\item $p$ is strictly decreasing and $p(0) >0$ ;
\item $p(y)$ is concave and differentiable on $y \geq 0$ with $p'(0^+)<0$;
\item $p(y) \rightarrow - \infty$ as $y \rightarrow \infty $.
\end{enumerate}
\end{assumption}
Since $p$ is decreasing, $p(0) >0$, and tends to $-\infty$, there is a unique zero crossing point $\ymax$ such that:
\begin{equation}
\label{eq:ymax}
p(\ymax)=0.
\end{equation}

These assumptions are common in the literature (e.g., see \cite{Johari05}). The first assumption states that the price decreases as quantity increases and $p(0)>0$ avoids trivial solutions.  Concavity of the demand function is largely made for analytical convenience and to avoid long derivations.  (See Remark \ref{rem:concavity} at the end of this section, which suggests that key results on scaling of optimal coalition size continue to hold even with weaker assumptions on demand, e.g., logconcavity.)  The last assumption is also made for analytical simplicity.  In practice, the price becomes zero for large enough $y$. This is analytically undesirable since $p$ may not be globally concave, so in the third assumption we allow $p$ to be negative. This assumption is essentially without loss of generality since the regime of interest is always restricted to aggregate production where $p$ is non-negative (see Proposition \ref{prop:NE}).

We make the following assumptions on the random variable $X_i$ throughout the paper:
\begin{assumption} \label{assump:X}
For all $i$, $X_i$ is a continuous random variable with finite mean.
\end{assumption}
This assumption is made mainly for analytic convenience.

It is now straight forward to show that a unique Nash equilibrium exists for the game $(\pi_1,\dots,\pi_N)$ as given in the following result.
\begin{proposition} \label{prop:NE}
Suppose $p$ satisfies Assumption \ref{assump:p} and the $X_i$'s satisfy Assumption \ref{assump:X}.
Then there exists a unique Nash equilibrium $\bd x$ for the game defined by $(\pi_1,\dots,\pi_N)$. Furthermore, $\sum x_i \leq y_{\max}$.
\end{proposition}
The proof of this proposition is given in Appendix \ref{app:NE}.

\begin{remark}
\label{rem:concavity}
The concavity assumption on the inverse-demand function $p$ can be relaxed to much weaker conditions. For example, suppose that $p$ is decreasing, continuous, and there exist a $\ymax$ such that $p(\ymax)=0$. Note $p$ is not necessarily concave. If all of the firms experience uncertainties with the same marginal distribution, the penalty function $\E[(x-X_i)^+]$ is identical for each firm $i$. By the result of \cite{Macmanus64} (or see Theorem 1 in \cite{Novshek85}), there exists a symmetric equilibrium for the game defined by $(\pi_1,\dots,\pi_N)$.

Of course without knowing more about $p$, it is hard to characterize the equilibria of the game. In this paper, we investigate the case of concave $p$, but as a consequence of the preceding result, we expect that our main conclusions should extend to more general types of inverse demand functions, at least in the case when firms are identical.
\end{remark}

\subsection{\newrj{The System Operator}}

We are interested in the efficiency of the Nash equilibrium of the game $(\pi_1,\dots,\pi_N)$.  \newrj{In this section, we outline the benchmark welfare we consider; in particular, we focus on the maximum welfare achieved by a centralized {\em system operator}.}

To begin, given the price function $p$, we define aggregate consumer surplus in the usual way as:
\begin{equation}\label{eqn:U}
U(y)=\int_0^{y} p(z) dz.
\end{equation}

\newrj{The system operator aims to maximize aggregate consumer surplus, but incurs a cost a cost based on the \emph{expected aggregated shortfall}, assuming that it can control the output of all the producers.
Note that a key difference between this work and the existing literature is how the cost is accounted for: the system operator is effectively optimizing as if it is a ``grand coalition'' of all firms, with utility defined by aggregate consumer surplus.  This model is inspired by the role of the independent system operator (ISO) in electricity markets; see, e.g., \cite{Kirschen04}.}
\begin{subequations} \label{eqn:U_max}
\begin{align}
\mbox{maximize } & U \left(\sum_{i}^N x_i\right) - \E\left[\left(\sum_{i=1}^N x_i - \sum_{i=1}^N X_i \right)^+ \right] \\
\mbox{subject to } & x_i \geq 0, \; \forall \; i.
\end{align}
\end{subequations}
 The reason a system operator faces the aggregate shortfall, $\E\left[\left(\sum_{i=1}^N x_i - \sum_{i=1}^N X_i \right)^+ \right]$, instead of the sum of individual shortfalls, $\sum_{i=1}^N \E\left[\left( x_i -  X_i \right)^+ \right]$, is that it is always beneficial to equalize possible shortfalls in quantity by sharing resources. In the motivating example of electricity markets, suppose a system operator had control of all wind farms. Then as long as the sum of total realized wind, $\sum_i X_i$, is larger than the sum of the total committed wind, $\sum_i x_i$, there would be no additional cost.

The following lemma formalizes the benefit of aggregation to a system operator.
\begin{lemma}\label{lem:aggregate}
Suppose the $X_i$'s satisfy Assumption \ref{assump:X}. Then
\begin{equation}\label{eqn:aggregate_cost}
 \E\left[\left(\sum_{i=1}^N x_i - \sum_{i=1}^N X_i \right)^+ \right] \leq  \sum_{i=1}^N \E \left[ \left( x_i-X_i \right)^+ \right].
\end{equation}
\end{lemma}
The proof of this lemma is a straightforward application of Jensen's inequality and is given in Appendix \ref{app:aggregate}.


Note that the objective function in \eqref{eqn:U_max} only depends on $\sum_{i=1}^N x_i$. With a change of variables, we can rewrite \eqref{eqn:U_max} as:
\begin{subequations} \label{eqn:U_max_y}
\begin{align}
\mbox{maximize } & U (y) - \E\left[\left(y - \sum_{i=1}^N X_i \right)^+ \right] \\
\mbox{subject to } & y \geq 0.
\end{align}
\end{subequations}
By Assumption \ref{assump:p}, $U$ is differentiable and concave, so the optimal solution to \eqref{eqn:U_max_y} is the unique positive solution to
\begin{equation}
\label{eq:ymaxprime}
p(y)-\Pr\left(y\geq \sum_{i=1}^N X_i\right) =0.
\end{equation}

Thus we have the following corollary.
\begin{corollary}
\label{cor:eff}
A vector $x$ is efficient (i.e., solves the optimization problem \eqref{eqn:U_max}) if and only if:
\begin{equation}
\label{eq:efficiency}
\sum_i x_i = \ymax',
\end{equation}
where $\ymax'$ is the unique solution to \eqref{eq:ymaxprime}.  Further, $\ymax' \leq \ymax$.
\end{corollary}
Therefore at equilibrium, if the aggregate production of firms is $\ymax'$, the equilibrium is socially optimal.\footnote{It is straightforward to show, using the fact that $p$ is decreasing, that in any equilibrium firms produce {\em less} than $\ymax'$.  We omit this standard argument.}

\subsection{Efficiency Ratio}
In this section we define two closely related efficiency ratios. The first is the usual notion based on the gap between welfares of the Nash equilibrium and the \newrj{system} operator's optimal outcome.  \newrj{This is similar to the ``price of anarchy'' (e.g., \cite{koutsoupias1999worst}), but with the change that the benchmark is the system operator's payoff.} The second efficiency ratio is defined as the gap between the \emph{aggregate output} at the Nash equilibrium and $\ymax'$.

\begin{definition} \label{defn:r_u}
Consider the Cournot game $(\pi_1,\dots,\pi_N)$. Let $(x_1,\dots,x_N)$ be the Nash equilibrium of this game and let $\ymax'$ be the solution to \eqref{eq:ymaxprime}. The efficiency ratio $r_W$, with respect to the welfare, is:
\begin{equation} \label{eqn:r_u}
r_W= \frac{U\left(\sum_{i=1}^N x_i\right) - \sum_{i=1}^N  \E\left[ (x_i-X_i)^+ \right] }{U(\ymax')-\E \left[\left( \ymax'-\sum_i^N X_i \right)^+\right]}.
\end{equation}
The efficiency ratio $r_O$, with respect to the output, is:
\begin{equation} \label{eqn:r}
r_O=  \frac{\sum_{i=1}^N x_i}{\ymax'}.
\end{equation}
\end{definition}
Both efficiency ratios are of interest. The ratio of welfares measures how (in)efficient the Nash equilibrium is, and the ratio of quantities  directly compares the total output under the Nash equilibrium and the aggregate social output. For example, in electricity markets, the independent system operator (ISO) is typically interested in ensuing the maximum amount of renewable power (e.g., wind) is injected into the market. Therefore, $r_O$, the ratio of the total output under the Nash equilibrium and the maximum possible output $\ymax'$ can be a more direct measure of interest than $r_W$, the ratio of welfares.

The rest of this paper investigates the behavior of $r_W$ and $r_O$ as the number of firms and the number of coalitions grow. In particular, we characterize the asymptotic scaling rate of both efficiency ratios.

\subsection{Deterministic Cournot Games} \label{sec:deter}

Before moving on to the main result in Sections \ref{sec:iid} and \ref{sec:corr}, we consider a {\em deterministic} version of the Cournot game, i.e., one without production uncertainty. Understanding of this deterministic game provides context for our results; further, our proofs use the deterministic setting as a building block.

In the deterministic setting, we ignore the second stage of the game. Therefore the payoff for firm $i$ is:
\begin{equation}\label{eqn:opi}
\opi_i (x_i,{\bd x}_{-i}) = p\left(\sum_{l=1}^N x_l\right) x_i.
\end{equation}
Compared with \eqref{eqn:pi}, note that the cost for shortfall is omitted.
For the rest of the paper, we use overlined variables to represent quantities in the deterministic game.

Consider the game defined by $(\opi_1,\dots,\opi_N)$. By the same reasoning as in Proposition \ref{prop:NE}, a unique Nash equilibrium exists for this game, denoted by $(\ox_1,\dots,\ox_N)$.  The social welfare $U(y)$ is maximized at $\ymax$. The efficiency ratios for this game are defined as:
\begin{equation} \label{eqn:rbar_utility}
\overline r_W = \frac{ U \left(\sum_i^N \ox_i \right)}{U(\ymax)}
\end{equation}
and
\begin{equation} \label{eqn:rbar}
\overline{r}_O = \frac{\sum_{i=1}^N {\ox_i}}{\ymax}.
\end{equation}

The behavior of $\overline{r}_W$ and $\overline{r}_O$ as $N$ increases is well understood; see, e.g., \cite{Johari05}. As noted in Proposition \ref{prop:rbar}, $\overline{r}_W$ approaches $1$ and the game becomes efficient in the limit of many firms. As we show in Section \ref{sec:iid}, this is no longer true if production uncertainty is present.
\begin{proposition}[Corollary 18 in \cite{Johari05}] \label{prop:rbar}
\begin{equation*}
\lim_{N \rightarrow \infty} \overline{r}_W \rightarrow 1 \mbox{ and } \lim_{N \rightarrow \infty} \overline{r}_O \rightarrow 1.
\end{equation*}
\end{proposition}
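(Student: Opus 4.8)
The plan is to exploit the symmetry of the deterministic game. Since all $N$ firms are identical (zero cost) and $p$ is concave, I expect a \emph{unique, symmetric} Nash equilibrium pinned down by first-order conditions. Writing $\overline{y} = \sum_l \overline{x}_l$ for the aggregate output at an equilibrium, the idea is to show that $\overline{y}$ equals a quantity $\overline{y}_N$ characterized by a single scalar equation, extract the asymptotics of $\overline{y}_N$ as $N \to \infty$, and then observe that $\overline{r} = \overline{y}_N/\ymax$.

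Concretely, I would proceed in three steps. \textbf{(1) Equilibrium characterization.} As in the proof of Proposition \ref{prop:NE}, each payoff $\overline{\pi}_i$ is concave in $\overline{x}_i$, so at a best response the Karush--Kuhn--Tucker conditions are necessary and sufficient. If $\overline{x}_i > 0$ then $p(\overline{y}) + p'(\overline{y})\,\overline{x}_i = 0$, i.e.\ $\overline{x}_i = -p(\overline{y})/p'(\overline{y})$, the same value for every firm producing a positive amount; this value is strictly positive, which forces $p(\overline{y}) > 0$ and hence $\overline{y} < \ymax$. If instead $\overline{x}_i = 0$, then the right derivative of $\overline{\pi}_i$ at $0$, which equals $p(\overline{y})$, must be $\leq 0$, i.e.\ $\overline{y} \geq \ymax$. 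These are incompatible unless every firm is in the same case; since all firms producing $0$ is impossible (then $\overline{y}=0$ and the derivative at $0$ is $p(0)>0$), every firm produces a positive amount. Hence every equilibrium is symmetric with $\overline{x}_i = \overline{y}/N$, and substituting into the first-order condition gives
\begin{equation*}
p(\overline{y}_N) = -\frac{\overline{y}_N\, p'(\overline{y}_N)}{N}.
\end{equation*}
\textbf{(2) Uniqueness of $\overline{y}_N$.} The function $h(y) := p(y) + y\,p'(y)/N$ satisfies $h(0) = p(0) > 0$ and $h(\ymax) = \ymax\, p'(\ymax)/N < 0$ (using $p'(\ymax) < 0$, which follows from strict monotonicity and concavity of $p$), and it is strictly decreasing, since $p$ is strictly decreasing and $y \mapsto y\,p'(y)$ is nonincreasing (for $y_1<y_2$, $y_2 p'(y_2) \leq y_2 p'(y_1) \leq y_1 p'(y_1)$, using that $p'$ is nonincreasing and nonpositive). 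Thus $\overline{y}_N \in (0,\ymax)$ is the unique root, and the equilibrium is unique. \textbf{(3) Asymptotics.} Since $p$ is concave and differentiable on $[0,\infty)$, $p'$ is nonincreasing, so $|p'(y)| \leq \max\{|p'(0^+)|,|p'(\ymax)|\} =: M < \infty$ for $y \in [0,\ymax]$. Hence $0 \leq p(\overline{y}_N) \leq M\ymax/N \to 0$. As $p$ is continuous and strictly decreasing on $[0,\ymax]$ with $p(\ymax)=0$, this forces $\overline{y}_N \to \ymax$, and therefore $\overline{r} = \overline{y}_N/\ymax \to 1$.

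The only delicate point is Step (1) --- ruling out asymmetric or boundary equilibria, so that the infimum in \eqref{eqn:rbar} is in fact attained at the unique symmetric profile; everything else is a routine estimate once concavity and differentiability of $p$ on all of $[0,\infty)$ (Assumption \ref{assump:p}) are used to bound $|p'|$ on $[0,\ymax]$. Since the statement is quoted as Corollary~18 of \cite{Johari05}, one may simply cite it; I sketch the argument above because it is short and self-contained, and because the scalar characterization of $\overline{y}_N$ is reused in the analysis of the game with uncertainty.
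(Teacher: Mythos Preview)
Your proof is correct. The paper does not actually prove Proposition~\ref{prop:rbar}; it simply cites Corollary~18 of \cite{Johari05}. Your self-contained argument is sound, and in fact your scalar characterization $p(K\ox_K)+p'(K\ox_K)\ox_K=0$ and the ensuing bound reappear almost verbatim in the second half of the paper's proof of Theorem~\ref{thm:rate} (equations \eqref{eqn:ox}--\eqref{eqn:pdelta_2}), where the same estimate yields $\ymax-K\ox_K=O(1/K)$; so your approach is precisely the one the paper relies on later.
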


\subsection{Coalitions}

In this section, we define Cournot competition among coalitions of firms. Given $N$ firms, let $\mS_1,\dots,\mS_K$ be a partition of $\{1,\dots,N\}$. Let $(x_1,\dots,x_N)$ be a vector of production levels for each firm. The aggregate production commitment of group $\mS_k$ is denoted as $x(\mS_k)$:
\begin{equation*}
x(\mS_k)=\sum_{i \in \mS_k} x_i.
\end{equation*}
Similarly let $X (\mS_k)=\sum_{i \in \mS_k} X_i$ denote the aggregate (random) realized capacity of the group $\mS_K$. The payoff of the group $\mS_k$ is defined as:
\begin{equation}
\pi_k (x(\mS_k)) = p \left(\sum_k x(\mS_k)\right) x(\mS_k) - \E[\left(x(\mS_k ) - X (\mS_k)\right)^+].
\end{equation}
Note that, as for the system operator, a coalition benefits by being able to use the excess production of one member to offset the shortfall of another.  Thus the penalty incurred by the coalition is the shortfall between their {\em aggregate} realized capacity and {\em aggregate} production commitment.   Note that we do not consider the internal profit sharing contracts of each coalition; instead, we focus on profit maximization of the coalition.

Given $\mS_1,\dots,\mS_k$, we can define a Cournot game among coalitions through the payoff functions $(\pi_1,\dots,\pi_k)$.  In this game the action for group $\mS_k$ is the aggregate production commitment $x(\mS_k)$; as with profit, we do not focus on how this commitment is divided among the individual firms.
The game played by coalitions is a ``scaled'' version of the original game played by $N$ individual firms. The key difference is that the penalty is \emph{not linear} in the firms. By an analogous result to Lemma \ref{lem:aggregate},
\begin{equation}
\E[(x(\mS_k ) - X (\mS_k))^+] \leq \sum_{i \in \mS_k} \E[(x_i-X_i)^+].
\end{equation}
It is this reduction in risk that makes coalitions useful, as we describe in the subsequent sections.

\section{Independent Firms} \label{sec:iid}
In this section, we consider the efficiency of the two-stage Cournot game when the production uncertainty is i.i.d. across firms.  Since we are interested in the large $N$ regime, we need to specify how the random variables $(X_1, \ldots, X_N)$ scale as $N$ increases. Recall that $X_i$ models the realized capacity of production.  Since we hold the price function constant as we increase $N$, we should reasonably expect that each firm will produce an infinitesimal amount in the limit.  If we do not adjust the production capacity accordingly, then each firm will effectively face no production uncertainty in the large $N$ limit.

Formally, we adjust the scale of the production capacity of each firm according to the following assumption.
\begin{assumption} \label{assump:N}
Let $X$ be a symmetric continuous random variable with $\E[|X|^3] < \infty$. Let $\E[X]=\mu$ and we assume $\mu > \ymax$; let $\mbox{Var(X)}=\sigma^2$. Suppose there are $N$ firms. The random variables $X_1,\dots,X_N$ are drawn i.i.d.~according to the distribution of $X/N$.
\end{assumption}
Under the above assumption, the expected total capacity is fixed at $\mu$ and is divided evenly among the $N$ firms. The technical assumption of a bounded third moment avoids random variables with very heavy tails, and is largely made for analytical convenience.  The assumption $\mu > \ymax$ streamlines the proofs, but is not essential. \newrj{We note that Assumption \ref{assump:N} can be equivalently formulated by holding the production capacity of each firm constant, but instead scaling the price functional as $N \cdot p$ as the number of producers $N$ increases.}

Under Assumption \ref{assump:N}, all firms are {\em ex ante} identical: they have the same profit and face the same production uncertainty.  In this section we consider coalitional competition where the firms are divided evenly into $K$ groups; since we are interested in large $N$ scenarios, we assume, without loss of generality, $N$ is a multiple of $K$. The two extreme values of $K$ are $K=1$ and $K=N$: the former corresponds to a grand coalition, while the latter corresponds to competition among individual firms.

The next theorem is the main result of this section, which relates the efficiency ratios to the group size $K$.
\begin{theorem} \label{thm:rate}
Suppose there are $N$ firms and $X_1,\dots,X_N$ satisfies Assumption \ref{assump:N}. Let $(\mS_1,\dots,\mS_K)$ be $K$ groups where each group has $N/K$ firms. Let $(x(\mS_1), \dots,x(\mS_K))$ be the solution to the game $(\pi_1,\dots,\pi_K)$. Then the efficiency ratios scales as:
\begin{align}
 r_W &= \frac{ U \left( \sum_{k=1}^K x(\mS_k) \right) - \sum_{k=1}^K \E\left[ \left(x(\mS_k)-X(\mS_k)\right)^+ \right] }{U(\ymax')-\E \left[\left( \ymax'-\sum_i^N X_i \right)^+\right]} \nonumber \\
 & = 1- O\left(\frac{1}{K^2}\right)-O\left(\frac{K}{N}\right)  \label{eqn:rate_utility}
\end{align}
and
\begin{equation} \label{eqn:rate}
r_O =\frac{\sum_{k=1}^K x(\mS_k)}{\ymax'}=1-O\left(\frac{1}{K}\right) - O\left(\frac{K}{N}\right).
\end{equation}
\end{theorem}

We focus on the discussion of the results here; the full proof of Theorem \ref{thm:rate} is in Appendix \ref{app:thmrate}. The last two terms in \eqref{eqn:rate_utility} and \eqref{eqn:rate} can be interpreted as the effects of market power and production uncertainty, respectively. In \eqref{eqn:rate_utility}, the inefficiency due to market power scales as $1/K^2$, and it decreases as $K$ grows. On the other hand, the inefficiency due to production uncertainty scales as $K/N$, which decreases as $N/K$ (the number of members in each coalition) grows. Similarly, in \eqref{eqn:rate}, $1/K$ represents the inefficiency due to market power and $K/N$ represents the inefficiency due to uncertainty.  Note from \eqref{eqn:rate_utility} and \eqref{eqn:rate} that $r_W$ and $r_O$ approach $1$ as long as $K$ and $N/K$ both grow without bound. The following corollary gives the optimal coalition size for maximizing the rates at which they approach $1$.
\begin{corollary} \label{cor:opt_size}
The optimal coalition structure to maximize the right hand side of \eqref{eqn:rate_utility} is to divide $N$ firms into $\Omega(N^{1/3})$ groups, each of size $\Omega(N^{2/3})$.

The optimal coalition structure to maximize the right hand side of \eqref{eqn:rate} is to divide $N$ firms into $\Omega(\sqrt{N})$ groups, each of size $\Omega(\sqrt{N})$.
\end{corollary}
This Corollary follows directly from balancing the terms in Theorem \ref{thm:rate}. It is interesting to note that $r_W$ and $r_O$ are optimized by different coalition formations. However, both rates suggest that it is more efficient to have intermediate regimes of coalition sizes other than the two extremes of individual firms and the grand coalition.

The result in Theorem \ref{thm:rate} can be extended to a more general shortfall penalty, as in the following corollary.
\begin{corollary} \label{cor:rate}
Let $f$ be a  convex increasing function with bounded derivative, satisfying $f(x)=0$ for all $x \leq 0$. For a group $\mS$, let its total profit be given by:
\begin{equation}
\pi_\mS (x_\mS)=p\left(\sum_{i=1}^N x_i\right)\left(\sum_{k \in \mS} x_k\right)- \E[f(x_\mS-X_\mS)].
\end{equation}
Under the same conditions as Theorem \ref{thm:rate}, the efficiency ratios still scale as
\begin{equation*}
r_W= 1- O\left(\frac{1}{K^2}\right)-O\left(\frac{K}{N}\right)
\end{equation*}
and
\begin{equation*}
r_O= 1- O\left(\frac{1}{K}\right)-O\left(\frac{K}{N}\right).
\end{equation*}
\end{corollary}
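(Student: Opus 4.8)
\textbf{Proof proposal for Corollary \ref{cor:rate}.} The plan is to follow the two-step structure of the proof of Theorem \ref{thm:rate} essentially verbatim, changing only how the shortfall penalty enters. First I would check that the coalitional game with payoffs $\pi_\mS$ is still concave: $p\!\left(\sum_i x_i\right)x_\mS$ is concave in the group's action $x_\mS$ exactly as in Proposition \ref{prop:NE}, and $-\E[f(x_\mS - X_\mS)]$ is concave since $f$ is convex; the penalty is finite because a convex increasing $f$ with bounded derivative is Lipschitz, so $\E[f(x_\mS - X_\mS)] \le L\,\E[|x_\mS - X_\mS|] < \infty$ with $L := \sup f'$ (using that $X$ has finite mean and $f(0)=0$). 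Hence Rosen's theorem gives existence, and since all $K$ groups have size $N/K$ and the $X(\mS_k) =: X_K$ are identically distributed, there is a symmetric equilibrium with common group commitment $x_K$. Because $f$ is Lipschitz and $X_K$ is continuous, differentiation under the expectation is justified, and the symmetric equilibrium satisfies the first-order condition $p(Kx_K) + p'(Kx_K)x_K = \E[f'(x_K - X_K)]$, the analogue of \eqref{eqn:xD}.

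Next I would compare with the deterministic $K$-player game, which is unchanged: its symmetric equilibrium $\ox_K$ still solves $p(K\ox_K) + p'(K\ox_K)\ox_K = 0$, with $K\ox_K \le \ymax$. Since $f$ is increasing, $\E[f'(x_K - X_K)] \ge 0$; the left side of the stochastic first-order condition is strictly decreasing in $x_K$ while the right side is nonnegative and nondecreasing, so the same monotonicity argument as in the theorem gives $x_K \le \ox_K$, and I write $x_K = \ox_K - \Del$ with $\Del \ge 0$. Subtracting the two first-order conditions and discarding the nonnegative bracket $p'(K\ox_K - K\Del)(\ox_K - \Del) - p'(K\ox_K)\ox_K$ exactly as in the derivation of \eqref{eqn:1} yields $p(K\ox_K - K\Del) - p(K\ox_K) \le \E[f'(\ox_K - \Del - X_K)]$. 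The one new estimate is for this right-hand side: because $f(x) = 0$ for $x \le 0$ and $f$ is convex, $f'(z) = 0$ for $z < 0$, so $\E[f'(\ox_K - \Del - X_K)] \le L\,\Pr(X_K < \ox_K - \Del) \le L\,\Pr(X_K \le \ymax/K)$, using $\Del \ge 0$ and $K\ox_K \le \ymax$. Combining with the concavity lower bound $p(K\ox_K - K\Del) - p(K\ox_K) \ge K\Del\,(-p'(0))$ gives $K\Del \le L\,\Pr(X_K \le \ymax/K)/(-p'(0))$, and Chebyshev's inequality applied to $X_K$ (mean $\mu/K$, variance $\Var(X)/(KN)$, with $\mu > \ymax$) gives $K\Del = O(K/N)$, exactly as in \eqref{eqn:Kdelta}.

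The second half of the theorem's proof — bounding $\delta := \ymax - K\ox_K = O(1/K)$ — uses only the deterministic game and is therefore unaffected, so it transfers word for word. Adding the two bounds, $\ymax - \sum_k x(\mS_k) = \delta + K\Del = O(1/K) + O(K/N)$, and dividing by $\ymax$ gives $r = 1 - O(1/K) - O(K/N)$.

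I expect the only point needing genuine care to be the marginal behaviour of the penalty: justifying $\frac{d}{dx}\E[f(x - X_K)] = \E[f'(x - X_K)]$ (where the bounded-derivative hypothesis and continuity of $X_K$ are used, so that $f'$ exists at $x - X_K$ almost surely and dominated convergence applies), and the bound $\E[f'(\,\cdot\, - X_K)] \le L\,\Pr(X_K \le \,\cdot\,)$ (where $f(x)=0$ for $x\le 0$ and $f'\le L$ enter). Once these are in hand, the rest is a transcription of the proof of Theorem \ref{thm:rate} with the constant $1$ multiplying $\Pr(X_K \le \,\cdot\,)$ replaced by $L$.
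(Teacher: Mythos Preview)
Your proposal is correct and follows essentially the same route as the paper's own proof: compare the stochastic symmetric equilibrium $x_K=\ox_K-\Del$ to the deterministic one, subtract first-order conditions, and replace the probability $\Pr(X_K\le \cdot)$ in \eqref{eqn:Kdelta} by $\E[f'(x_K-X_K)]\le L\,\Pr(X_K\le x_K)$ using $f'=0$ on $(-\infty,0)$ and $f'\le L$; the $O(1/K)$ half is literally unchanged. The paper's argument is terser---it simply asserts the analogue of \eqref{eqn:Kdelta} and the bound $\E[f'(x_K-X_K)]\le B\,\Pr(x_K-X_K\ge 0)$---so your additional care with concavity, existence, and differentiation under the expectation only adds rigor without altering the approach.
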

The proof of this corollary is given in Appendix \ref{app:corrate}.

\begin{remark}
The scaling rates in Theorem \ref{thm:rate} and Corollary \ref{cor:rate} are stated as lower bounds.  \newrj{There are instances that demonstrate that these rates are tight. For example, let $p(x)=1-x$.  In this case the inefficiency due to market power can be calculated exactly and scales as $r_W = \Omega(1/K^2)$  and $r_O = \Omega(1/K)$. Let $X_i$ be a continuous random variable that satisfies Chebyshev's inequality with equality (see, e.g.,~\cite{Durrett99}). Then both $r_W$ and $r_O$ scale as $\Omega(K/N)$.  (See the proof of Theorem \ref{thm:rate} for details.)}
\end{remark}

It is worthwhile to note that the scaling rates in Theorem \ref{thm:rate} and Corollary \ref{cor:rate} represent the asymptotic behavior of firms. Constants of the terms in the scaling rate are determined by the particular distributions of production uncertainty, and the price function.  We illustrate the behavior of the efficiency ratio at finite $N$ with the following example.  \\

\begin{example}
\label{ex:1}
Let $p(y)=1-y$. Let $X$ be normally distributed as $\mathcal N(1.1,1)$. Note that $\ymax=1 < 1.1 $. Let $X_i$ be drawn i.i.d.~according to the distribution of $X/N$. Figure \ref{fig:rate} plots the efficiency ratio for two groups sizes: $\sqrt{N}$ and $N^{2/3}$. Theorem \ref{thm:rate} and Corollary \ref{cor:opt_size} show that groups of size $N^{2/3}$ is optimal for welfare efficiency and groups of size $\sqrt{N}$ is optimal for rate efficiency. Figures \ref{fig:rate_welfare} and \ref{fig:rate_x} valid these claims, although there is a switch over point in Fig. \ref{fig:rate_x}.
\begin{figure}[ht]
\centering
\begin{subfigure}{\linewidth}
\includegraphics[scale=0.4]{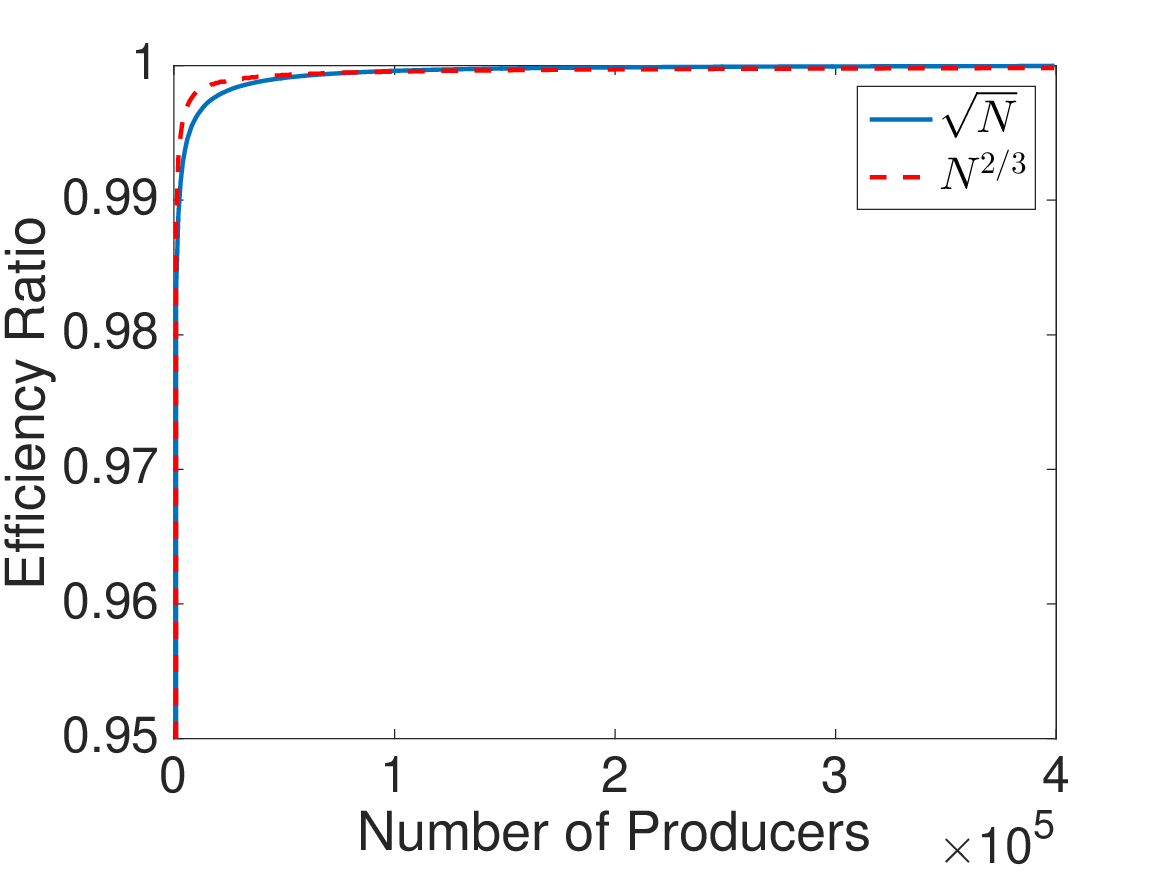}
\caption{Scaling of the welfare efficiency ratio.}
\label{fig:rate_welfare}
\end{subfigure}

\begin{subfigure}{\linewidth}
\includegraphics[scale=0.4]{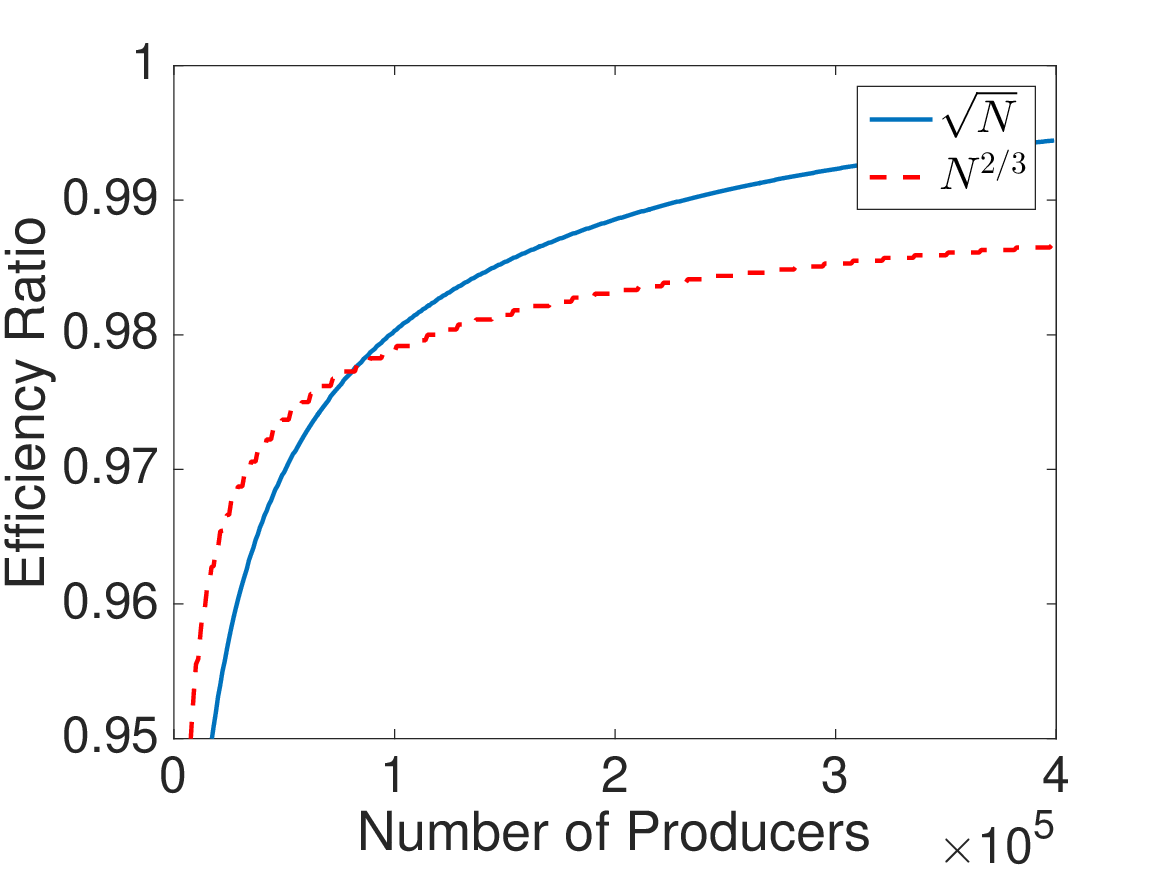}
\caption{Scaling of the quantity efficiency ratio.}
\label{fig:rate_x}
\end{subfigure}
\caption{Scaling of the efficiency ratios for groups of size $N^{2/3}$ and $\sqrt{N}$ (for $N^{1/3}$ and $\sqrt{N}$ total groups, respectively). The ratio of welfares, $r_W$, approaches $1$ faster when groups are of size $N^{2/3}$ than when groups are of size $\sqrt{N}$. The ratio of quantities, $r_O$, approaches 1 asymptotically faster when groups are of size $\sqrt{N}$ than when groups are of size $N^{2/3}$.}
\label{fig:rate}
\end{figure}
\end{example}
%
%

\section{Correlated Firms} \label{sec:corr}
In this section, we consider two models where firms have {\em correlated} production uncertainty.

\subsection{Weakly Correlated Firms}

The $O(K/N)$ terms in \eqref{eqn:rate_utility} and \eqref{eqn:rate} result from the law of large numbers.  The following corollary recovers the same result, using a version of the law of large numbers for correlated random variables.
\begin{corollary} \label{cor:weak}
Let $X$ be a random variable satisfying Assumption \ref{assump:X}. Let $\E[X]=\mu > \ymax$. Suppose there are $N$ firms. Assume the random variables $X_1,\dots,X_N$ each have marginal distribution that is the same as $X/N$. Let $(\mS_1,\dots,\mS_K)$ be $K$ groups where each group has $N/K$ firms. Let $(x(\mS_1), \dots,x(\mS_K))$ be the solution to the game $(\pi_1,\dots,\pi_K)$.

If
\begin{equation} \label{eqn:weak_cor}
\sum_{j=1}^N \left|\E\left[\mbox{cor}\left(X_i, X_j \right) \right]\right| \leq \frac{c}{N} \; \forall \; i
\end{equation}
for some $c$ independent of $N$, then the efficiency ratios scale as in \eqref{eqn:rate_utility} and \eqref{eqn:rate}.
\end{corollary}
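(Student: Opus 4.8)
The plan is to mimic the proof of Theorem~\ref{thm:rate} essentially verbatim, with the only change being in the step where the law of large numbers is invoked. Recall that under i.i.d.\ sampling, the proof produced the bound
\begin{equation*}
K\Del \leq \frac{\Pr(X_K \leq \ox_K)}{-p'(0)} \leq \frac{\Pr(X_K \leq \ymax/K)}{-p'(0)},
\end{equation*}
and then used Chebyshev's inequality together with $\mu > \ymax$ to conclude $\Pr(X_K \leq \ymax/K) = O(K/N)$. Everything up to and including this display goes through unchanged in the correlated case, because (i) the existence of Nash equilibria (Proposition~\ref{prop:NE}) and the symmetry of the equilibria within the equal-sized partition only use the marginal distributions, which are still $X/N$; (ii) the concavity/monotonicity manipulations of $p$ leading to \eqref{eqn:1} and \eqref{eqn:2} never reference the joint law of the $X_i$; and (iii) the second half of the proof, bounding $\delta = \ymax - K\ox_K = O(1/K)$, involves only the deterministic game and $p$, so it is untouched. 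Hence it suffices to re-derive $\Pr(X_K \leq \ymax/K) = O(K/N)$ using hypothesis~\eqref{eqn:weak_cor} in place of independence.

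For that step, first I would write $X_K = X(\mS_k) = \sum_{i \in \mS_k} X_i$, so $\E[X_K] = (N/K)\cdot(\mu/N) = \mu/K$. Since $\mu > \ymax$, there is a fixed gap: $\mu/K - \ymax/K = (\mu - \ymax)/K$, so
\begin{equation*}
\Pr(X_K \leq \ymax/K) \leq \Pr\!\left(|X_K - \tfrac{\mu}{K}| \geq \tfrac{\mu - \ymax}{K}\right) \leq \frac{K^2}{(\mu-\ymax)^2}\,\Var(X_K)
\end{equation*}
by Chebyshev. It remains to show $\Var(X_K) = O(1/(KN))$, or more crudely $O(1/N)$, which already gives $\Pr(X_K \leq \ymax/K) = O(K^2/N)$ --- but we actually want $O(K/N)$, so we need the sharper $\Var(X_K) = O(1/(KN))$. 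Expanding,
\begin{equation*}
\Var(X_K) = \sum_{i,j \in \mS_k} \Cov(X_i, X_j) = \sum_{i \in \mS_k} \sum_{j \in \mS_k} \E\!\left[\left(X_i - \tfrac{\mu}{N}\right)\left(X_j - \tfrac{\mu}{N}\right)\right].
\end{equation*}
For each fixed $i \in \mS_k$, the inner sum over $j \in \mS_k$ is bounded in absolute value by the sum over \emph{all} $j \in \{1,\dots,N\}$, which by hypothesis~\eqref{eqn:weak_cor} is at most $c/N$. Summing over the $|\mS_k| = N/K$ indices $i \in \mS_k$ gives $\Var(X_K) \leq (N/K)(c/N) = c/K$. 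Plugging in yields $\Pr(X_K \leq \ymax/K) \leq \frac{K^2}{(\mu-\ymax)^2}\cdot\frac{c}{K} = O(K)$, which is useless.

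So the naive bound is too weak, and \textbf{this is the main obstacle}: the hypothesis as literally stated only controls $\Var(X_K) = O(1/K)$, giving $\Pr(X_K \leq \ymax/K) = O(K)$ rather than $O(K/N)$. To get the right rate one must exploit that the relevant deviation threshold $(\mu-\ymax)/K$ shrinks, but also that $X_K$ itself has order $1/K$, so the \emph{relative} deviation is order one; equivalently, rescale to $\tilde X_K := K X_K$, which has mean $\mu$ and variance $K^2\Var(X_K) \leq cK$. The correct reading of~\eqref{eqn:weak_cor} for our purposes should make $\Var(\tilde X_K) = O(1/(N/K))$, i.e.\ the per-coalition average concentrates at the standard $1/(\text{coalition size})$ rate; under that reading $\Var(X_K) = O(1/(KN))$ and $\Pr(X_K \leq \ymax/K) \leq \frac{K^2}{(\mu-\ymax)^2} O(1/(KN)) = O(K/N)$, as desired. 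The plan, then, is: (a) state precisely the concentration consequence of~\eqref{eqn:weak_cor} --- namely $\Var\big(\sum_{i\in\mS_k} X_i\big) = O\big(1/(KN)\big)$, tracking constants carefully to confirm the hypothesis delivers exactly this --- (b) apply Chebyshev to obtain $\Pr(X_K \leq \ymax/K) = O(K/N)$, (c) substitute into the unchanged chain of inequalities from Theorem~\ref{thm:rate} to conclude $K\Del = O(K/N)$, and (d) invoke the (untouched) second half of that proof to get $\delta = O(1/K)$, combining the two as before to obtain $r = 1 - O(1/K) - O(K/N)$.
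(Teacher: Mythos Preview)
Your plan is exactly the paper's: reduce to showing $\Pr\big(\sum_{i\in\mS_k} X_i \le \ymax/K\big) = O(K/N)$ and then invoke the proof of Theorem~\ref{thm:rate} verbatim. The paper's own argument is a two-line assertion of precisely this bound (citing a standard reference), so there is no methodological difference between your proposal and the published proof.

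What you have done, though, is uncover a genuine quantitative slip that the paper's terse proof hides. Your computation is correct: under the hypothesis~\eqref{eqn:weak_cor} \emph{as written}, one only gets
\[
\Var\Big(\sum_{i\in\mS_k} X_i\Big) \le \sum_{i\in\mS_k}\sum_{j=1}^N |\Cov(X_i,X_j)| \le \frac{N}{K}\cdot\frac{c}{N}=\frac{c}{K},
\]
and Chebyshev with threshold $(\mu-\ymax)/K$ then yields $O(K)$, not $O(K/N)$. The paper nevertheless writes down the $O(K/N)$ bound with the constant $c/(\mu-\ymax)^2$, which requires $\Var(X_K)\le c/(KN)$, i.e.\ $\sum_j |\Cov(X_i,X_j)|\le c/N^2$ rather than $c/N$. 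That stronger bound is in fact what the i.i.d.\ case gives (there $\sum_j|\Cov(X_i,X_j)|=\Var(X_i)=\Var(X)/N^2$) and what the paper's own motivating example $\Cov(X_i,X_j)\le A\rho^{|i-j|}$ gives once one remembers the $X_i$ are on the $1/N$ scale. So your ``correct reading'' is right: the hypothesis should be $c/N^2$, and with that correction both your plan and the paper's asserted inequality go through cleanly via steps (a)--(d) exactly as you wrote them.
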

An example of correlated $X_i$'s satisfying the above condition is where $\Cov(X_i,X_j) \leq A \rho^{|i-j|}$, for some finite $A$ and $\rho < 1$.  This type of model captures a Hotelling-like geographic structure, where firms with similar indices are more likely to face the same production constraints.  It is particularly relevant in electricity markets, where wind turbines located near each other exhibit this behavior. The proof of Corollary \ref{cor:weak} is given in Appendix \ref{app:weak}.

\subsection{Strongly Correlated Firms}

Earlier, we considered firms with {\em weakly} correlated production capacity, in the sense that the correlations between firms decays as the number of firms grows.
In this section, we consider the case of \emph{strongly} correlated production capacities, where the correlation between all firms remains positive as $N$ grows.

When the firms have correlated capacities, results similar to Theorem \ref{thm:rate} are difficult to obtain in general since the limiting distribution of $\sum_i X_i$ does not necessarily concentrate; any such result will depend on the particular joint distribution of the $X_i$'s. For this section, we assume that the correlation between random variables arises from an {\em additive} model, as described in the following assumption.
\begin{assumption} \label{assump:Z}
Let $X$ be a zero mean continuous random variable with symmetrical density and satisfies $\E[|X|^3] < \infty$. The random variables $\hat{X}_1,\dots,\hat{X}_N$ are drawn i.i.d.~according to the same distribution as $X/N$. Let $Z$ be a continuous random variable with mean $\mu$, finite variance, and symmetrical density around its mean. The random variables $Z$ and $\hat{X}_1,\dots,\hat{X}_N$ are independent.  The random variable $X_i$ is given by:
\begin{equation}
X_i=\hat X_i + \frac{Z}{N}\ \text{for all}\ i.
\end{equation}
\end{assumption}

Since $X_i$'s are strongly correlated, $\sum_i X_i$ no longer concentrates around its mean. Therefore, even for a system operator that jointly controls output of all firms, there is some residual uncertainty in the system. However, it is still beneficial to the system to share the production of all firms and then face the cost of the aggregate shortfall.
With the notations used in Assumption \ref{assump:Z}, the system operators's problem is
\begin{equation} \label{eqn:y_Z}
\max_{y \geq 0} U(y)- \E[(y-Z-\sum_{i=1}^N \hat X_i)^+].
\end{equation}
As before, let $\ymax'$ be the unique solution to \eqref{eqn:y_Z} and $\ymax' \leq \ymax$.




 Theorem \ref{thm:rate_Z} states that the efficiency ratios between the welfares and the quantities have the same large $N$ asymptotic behavior as in Theorem \ref{thm:rate}.
\begin{theorem} \label{thm:rate_Z}
Let $\hat{X}_1,\dots,\hat{X}_N$ and $Z$ be random variables that satisfy Assumption \ref{assump:Z}.  Let $(\mS_1,\dots,\mS_K)$ be a partition of $(1,\dots,N)$ with size $N/K$ each. Let $(x(\mS_1),\dots,x (\mS_k))$ be the solution to the two-stage game $(\pi_1,\dots,\pi_K)$. Suppose $\mu > \ymax'$. The efficiency ratios scale as:
\begin{align}
 r_W &= \frac{ U \left( \sum_{k=1}^K x(\mS_k) \right) - \sum_{k=1}^K \E\left[ \left(x(\mS_k)-X(\mS_k)\right)^+ \right] }{U(\ymax')-\E \left[\left( \ymax'-\sum_i^N X_i \right)^+\right]} \nonumber \\
 &= 1- O(\frac{1}{K^2})-O(\frac{K}{N}) \label{eqn:rate_Z_utility}
\end{align}
and
\begin{equation} \label{eqn:rate_Z}
r_O=\frac{\sum_k x(\mS_k)}{y_{\max}'}=1-O\left(\frac{1}{K}\right)-O\left(\frac{K}{N}\right).
\end{equation}
\end{theorem}
The proof of Theorem \ref{thm:rate_Z} proceeds in similar steps to the proof of Theorem \ref{thm:rate} and can be found in the Appendix \ref{app:rate_Z}.

\subsection{Simulation Results}

Here we plot the efficiency ratio for correlated firms and compare it to independent firms. Similar to Example \ref{ex:1}, let $p(y)=1-y$. Let $\hat{X}$ be normally distributed as $\mathcal{N} (1.1, 0.71)$ and let $Z$ be normally distributed as $\mathcal{N}(0,0.71)$; note that with this definition, the variance of $\hat{X}+Z$ is 1. Let $\hat{X}_i$ be drawn i.i.d.~according to the same distribution as $\hat{X}/N$. Figure \ref{fig:scaling_cor} shows the efficiency ratio $r_O$ for groups of size $\sqrt{N}$ on a semi-log plot. As a baseline, we also plot the efficiency ratio where the random variables are drawn i.i.d.~with normal distribution $\mathcal{N} (1.1,1)$.
\begin{figure}[ht]
\centering
\includegraphics[scale=0.3]{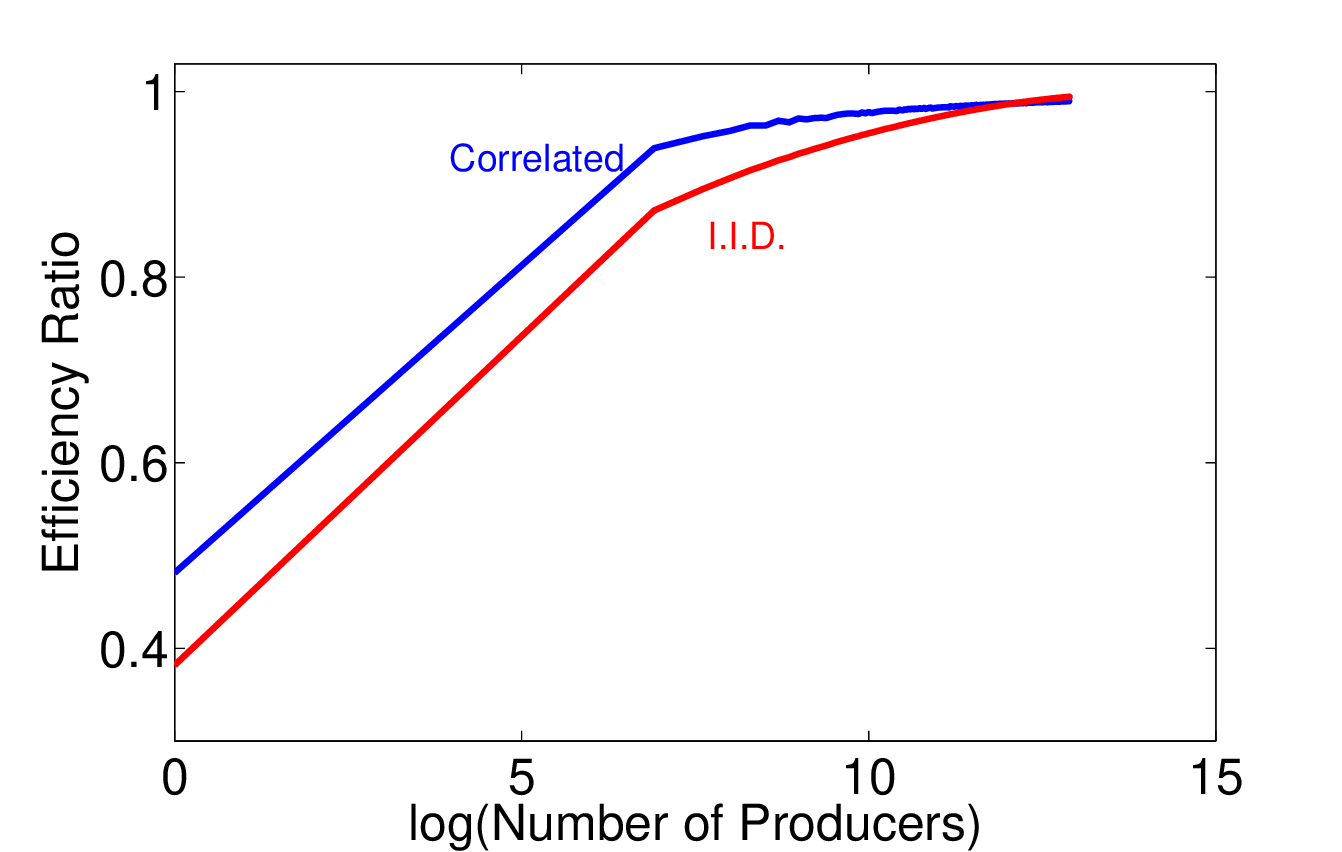}
\caption{Efficiency ratio for correlated firms and i.i.d.~firms as a function of the log of the number of firms. The groups size are $\sqrt{N}$ for both cases. We see that in the correlated case, the efficiency ratio increases much faster than the i.i.d.~case.}
\label{fig:scaling_cor}
\end{figure}

From Figure \ref{fig:scaling_cor}, we can see that the efficiency ratio $r_O$ approaches $1$ much faster if the firms are correlated. This is not unexpected, since production uncertainty is dominated by the common random variable $Z$, and the individual randomness can be averaged out more easily.

\section{Cournot Oligopsony} \label{sec:oligopsony}
There is a natural Cournot oligopsony game corresponding to the Cournot oligopoly game discussed in previous sections. Instead of suppliers of a common good, firms can be thought as consumers of a common good. If there are uncertainties in the \emph{demands} of the firms, a natural question arises: does coalition increase the efficiency of Cournot oligopsony as in the case of Cournot oligopoly?


We answer this question by showing that under certain demand side models, all results from previous sections can be directly applied through an equivalency between Cournot oligopoly and Cournot oligopsony. Before the technical details, let us consider the following motivating example of parking in business districts.
\begin{example}
Many cities around the world are experiencing increasing vehicle traffic in downtown and business areas. Currently, most cities allocate a fixed number of parking spaces to a firm.\footnote{Excluding private parking garages} For firm $i$, suppose it is allocated $x_i$ number of parking spaces. The amount of vehicles (demand of the firm) that visit the firm is a random variable denoted by $X_i$. If a vehicle successfully finds parking, the firm derives a certain amount of utility. If parking is not available when users try to visit a firm, it gets no utility. Therefore firm $i$'s utility is proportional to $\min(x_i,X_i)$.

An increase of parking spaces is correlated with the total traffic into downtown areas. This increase in traffic could potentially increase congestion, and therefore parking is tightly controlled by the city. We model this cost with a price function $\hat{p}(\sum x_i)$ and firm $i$ has cost $x_i \hat{p}(\sum x_i)$.
\end{example}

%

With the above motivation, we define the expected payoff for firm $i$ to be
\begin{equation}\label{eqn:pi_demand}
T_i(x_i,\bd{x_{-i}})=\E[\min(x_i,X_i)]-x_i \hat{p}\left(\sum_{l=1}^N x_l \right).
\end{equation}
Each firm is price anticipatory and firm $i$ chooses $x_i>0$ to maximize $T_i$ for a given $\bd{x_{-i}}$. A Nash equilibrium for this game is defined to be the same as in \eqref{eqn:nash}. As the analogous of Assumption \ref{assump:p}, we make the following assumption on the price $\hat{p}$.
\begin{assumption}
We assume that:
\begin{enumerate} \label{assump:p_demand}
\item $\hat{p}$ is strictly increasing and $0<\hat{p}(0)<1$;
\item  $\hat{p}(y)$ is convex and differentiable on $y \geq 0$ with $\hat{p}'(0^+)>0$;
\item $\hat{p}(y) \rightarrow  \infty$ as $y \rightarrow \infty $.
\end{enumerate}
\end{assumption}
The $\hat{p}(0)<1$ assumption is made to avoid trivial solutions, since if $\hat{p}(0)>1$ no firm will choose a positive bid. Note we assume that the utility and the price function, $\E[\min(x_i,X_i)]$ and $\hat{p}$, have been appropriately scaled to have the same units.

It is straight forward to show that a Nash equilibrium exists for the game $(T_1,\dots,T_N)$ as given in Proposition \ref{prop:NE_demand}.
\begin{proposition} \label{prop:NE_demand}
Suppose $\hat{p}$ satisfies Assumption \ref{assump:p_demand} and the $X_i$'s satisfy Assumption \ref{assump:X}. Then there exists a unique Nash equilibrium $\bd x$ for the game defined by $(T_1,\dots,T_N)$.
\end{proposition}
The proof of this proposition is given in Appendix \ref{app:NE_demand}.

\subsection{Social Welfare and Efficiency}
Given the price function $\hat{p}$, we define the aggregate cost as:
\begin{equation}
C(y)= \int_0^{y} \hat{p}(z) dz.
\end{equation}
Similar to the oligopoly case, a system operator with control of all the firms would always aggregate the supply and demand as shown by Lemma \ref{lem:aggregate_demand}.
\begin{lemma} \label{lem:aggregate_demand}
Suppose $x_1,\dots,x_N$ are a set of real numbers and $X_1,\dots,X_N$ are a set real random variables each satisfying Assumption \ref{assump:X}. Then
\begin{equation} \label{eqn:minxX}
\E\left[\min \left( \sum_{i=1}^N x_i,\sum_{i=1}^N X_i \right) \right] \geq \sum_{i=1}^N \E\left[\min(x_i,X_i)\right].
\end{equation}
\end{lemma}
The proof of this lemma is given in Appendix \ref{lem:aggregate_demand}.

Therefore, from a system operator's point of view, the optimal allocation is characterized by solving the following problem:
\begin{subequations} \label{eqn:social_demand}
\begin{align}
\mbox{maximize } & \E\left[\min \left( \sum_{i=1}^N x_i,\sum_{i=1}^N X_i \right) \right] - C\left(\sum_{i=1}^N x_i \right) \\
\mbox{subject to } & x_i\geq 0, \forall \; i.
\end{align}
\end{subequations}

The objective function in \eqref{eqn:social_demand} only depends on the sum of allocations, so we can write it as
\begin{subequations} \label{eqn:social_y}
\begin{align}
\mbox{maximize } & \E\left[\min \left( y,\sum_{i=1}^N X_i \right) \right] - C\left(y \right) \\
\mbox{subject to } & y\geq 0.
\end{align}
\end{subequations}

Let $\hat{y}_{\max}$ be the solution to \eqref{eqn:social_y}. As in Definition \ref{defn:r_u}, we define the two \emph{efficiency ratios}, one based on the welfares and one based on the quantities.
\begin{definition} \label{defn:r_demand}
Consider the Cournot game $(T_1,\dots,T_N)$. Let $(x_1,\dots,x_N)$ be the Nash equilibrium of this game. The efficiency ratio $r_W$, with respect to the welfares, is:
\begin{equation} \label{eqn:r_u_demand}
r_W= \frac{\sum_{i=1}^N \E\left[\min \left( x_i, X_i \right) \right] - C\left(\sum_{i=1}^N x_i \right)}{\E\left[\min \left( \hat{y}_{\max},\sum_{i=1}^N X_i \right) \right] - C\left(\hymax \right)}.
\end{equation}
The efficiency ratio $r_O$, with respect to the quantities, is:
\begin{equation} \label{eqn:r_demand}
r_O=  \frac{\sum_{i=1}^N x_i } {\hat{y}_{\max}}.
\end{equation}
\end{definition}
As in Sections \ref{sec:iid} and \ref{sec:corr}, we are interested in the behaviors of $r_W$ and $r_O$ as $N$ grows. The next section shows that there is an exact equivalence between the game $(T_1,\dots,T_N)$ studied in this section and the game $(\pi_1,\dots,\pi_N)$ studied in Sections \ref{sec:iid} and \ref{sec:corr}. Consequently the main results in Theorems \ref{thm:rate} and \ref{thm:rate_Z} carry over directly.

\subsection{Equivalence between Oligopoly and Oligopsony}
A certain level of symmetry between oligopoly and oligopsony is commonly expected in Cournot games, but the key difficulty is to ensure that efficient allocations and Nash equilibria remain unchanged between the two models. The following theorem is the main result of this section.

\begin{theorem} \label{thm:oligopsony}
Suppose that Assumptions \ref{assump:X} and \ref{assump:p_demand} hold.
Define a new price $p$ as:
\begin{equation}
p(y)=1-\hat{p}(y).
\end{equation}
Define an associated utility function $U(y)=\int_0^y p(z) dz$. Then
\begin{enumerate}
\item Assumption \ref{assump:p} is satisfied by $p$.
\item For any vector $\bd x\geq 0$, the following holds:
\begin{align}
& \E\left[\min\left(\sum_i x_i,\sum_i X_i \right)\right]-C\left(\sum_i x_i \right) \nonumber \\
& = U\left(\sum_i x_i\right)-\E\left[\left(\sum_i x_i-\sum_i X_i\right)^+\right), \label{eqn:eq_social}
\end{align}
as well as:
\begin{align}
T_i(x_i;\bd{x}_{-i})&=\E[\min(x_i,X_i)]-x_i \hat{p} \left(\sum_l x_l \right) \nonumber \\
&=x_i p\left(\sum_l x_l\right) - \E[(x_i-X_i)^+] \nonumber \\
&=\pi(x_i;\bd{x}_{-i}). \label{eqn:eq_i}
\end{align}
\item A vector $\bd x$ solves the system operator's problem in \eqref{eqn:U_max} if and only if it solves the system operator's problem in \eqref{eqn:social_demand}.
\item A vector $\bd x$ is a Nash equilibrium of the game defined by $(T_1,\dots,T_N)$ if and only if it is a Nash equilibrium of the game defined by $(\pi_1,\dots,\pi_N)$.
\end{enumerate}
\end{theorem}
In Appendix \ref{app:oligopsony} we provide proofs of the four claims above.

\section{Case Study of Wind Integration in Power Systems} \label{sec:power}
This section applies the result of this paper to electricity markets. This material is based on \cite{Zhang15}. We focus on the PJM control area in the United States (an area that covers most of the Mid-Atlantic states of the United States). In recent years, there has been a dramatic increase of wind generation in PJM. Among other effects, the one most pertinent to this paper is a significant drop in energy prices as wind penetration grows. Figure \ref{fig:pjm} (reproduced from \cite{Gil13}) shows the day-ahead electricity price can drop by as much as 50\% with just 10\% of wind penetration.
\begin{figure}[ht]
	\centering
	\includegraphics[scale=0.22]{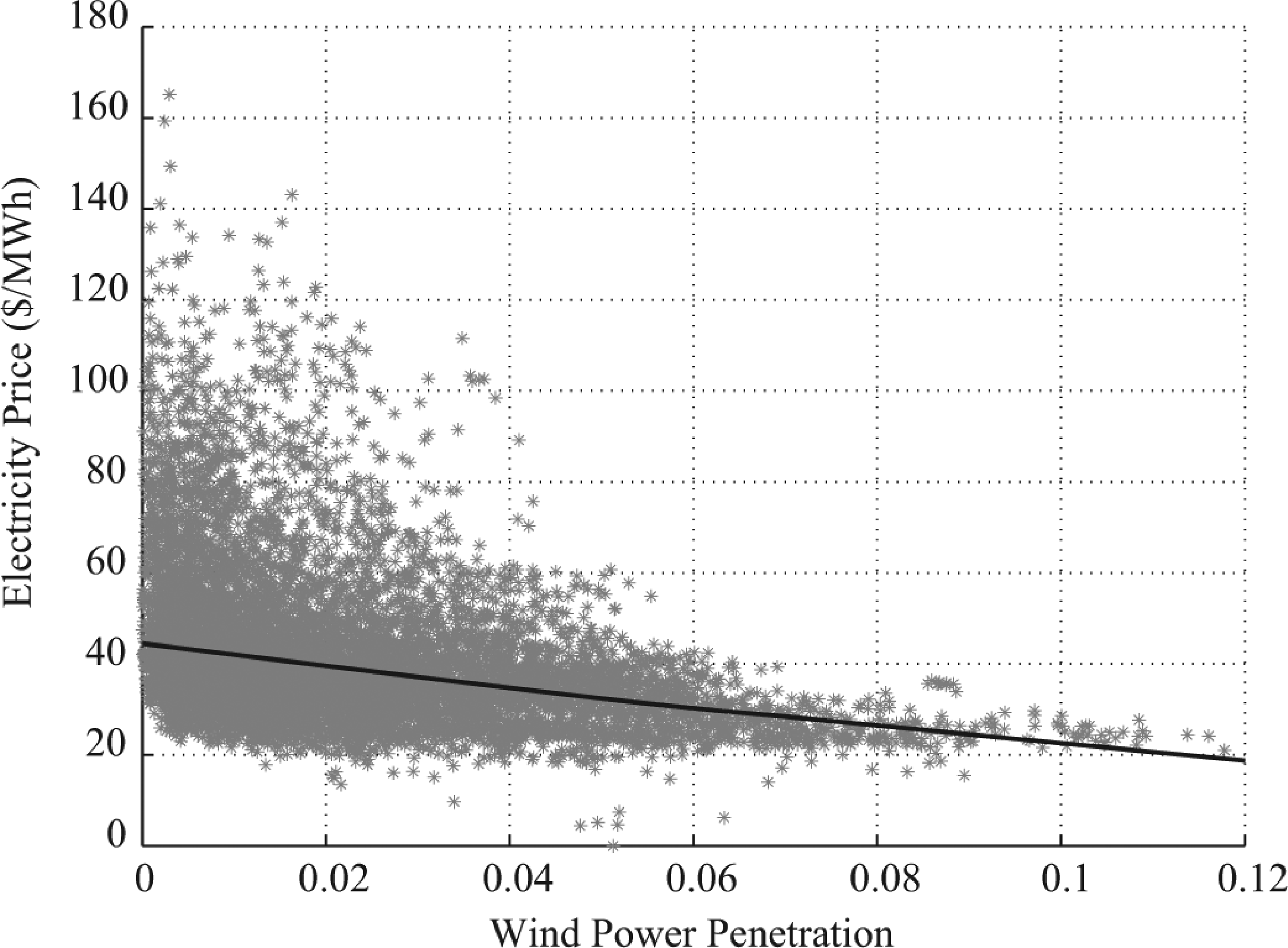}
	\caption{Scatter plot of PJM day-ahead prices and wind generation for 2012. Figure reproduced from \cite{Gil13}. The horizontal axis is the percentage penetration of wind, and the vertical axis is the average clearing price in the PJM area. As the amount of wind penetration increases from 0 to 10\%, the average price in the system drops by more than half.}
	\label{fig:pjm}
\end{figure}

The electricity market in PJM and most other regions of the United States operate with a two-stage structure. The first stage occurs roughly 24 hours before the actual time of delivery of electricity--called the day-ahead stage--where generators compete to satisfy the forecasted demand. The second stage is called the real-time stage since it adjusts supply and demand to meet any imbalance that may occur at the time of electricity delivery. Since most conventional generators (e.g., coal, nuclear, hydro, etc.) needs at least several hours to change production levels, most of the market is cleared at the day-ahead stage. In this paper, we are interested in the wind power producers and treat conventional generators as non-strategic entities. From PJM market reports (\cite{PJM13}), it seems that conventional generators already bid their true cost in the current market, and therefore would not change
their bid when wind power producers enter the market (bidding lower makes no economic sense and bidding higher decreases the chance that they are cleared). However, investigating the joint action between renewable and conventional generators is an important area of future research.

We think of wind power producers as the firms competing in the day-ahead market. Each firm faces uncertain supply, since its  output is determined by the wind conditions in the future; in addition, each firm can impact the price as shown in Fig. \ref{fig:pjm}. Therefore the Cournot game developed in this paper can be used to model their behavior.

For an empirical study, we use wind data produced by National Renewable Energy Lab (NREL) for Eastern United States \cite{nrel}. Both wind forecast value and forecast error are included in the data set. We interpret the forecast value as the mean of the random supply and the error as the variation in the supply. There are 302 wind farm locations that are in the PJM control area. Figure \ref{fig:correlation} shows the normalized standard deviation of the aggregate forecast as a function of the number of wind farms in the aggregate. Strong correlations can be observed between these forecasts.
\begin{figure}[ht]
\centering
\includegraphics[scale=0.5]{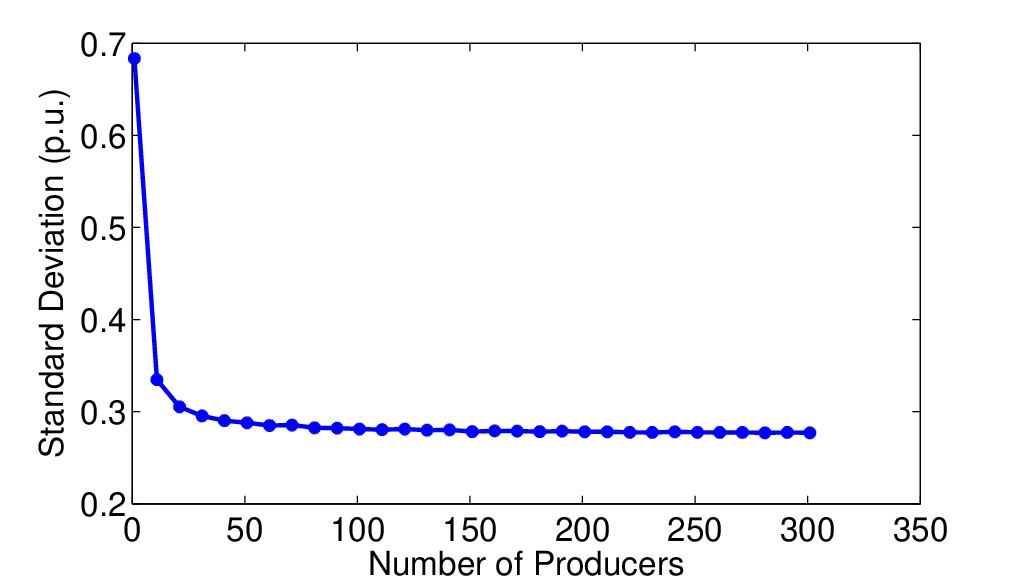}
\caption{Standard deviation as a function of the number of producers in an aggregation. The vertical axis is normalized by the total capacity of the aggregation.}
\label{fig:correlation}
\end{figure}

Applying the analysis in Section \ref{sec:corr}, Fig. \ref{fig:correlated} shows the optimal coalitions of the wind farms. Note the social optimal solution is taken to be the solution that maximizes the amount of wind injected into the market. Interested readers can find a much more detailed analysis in \cite{Zhang15}.
\begin{figure}[ht]
\centering
\includegraphics[scale=0.5]{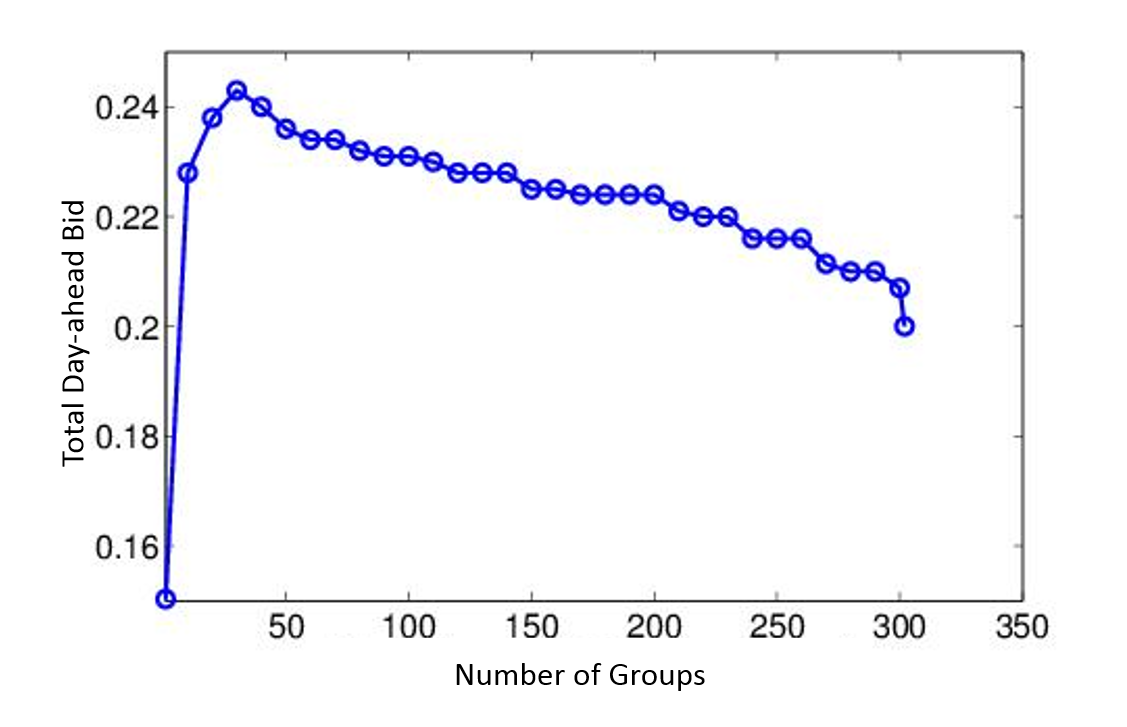}
\caption{Total amount of wind bid into the day-ahead market as a function of the number of groups for the wind farms in the NREL dataset. The maximum occurs at 30 groups, while the minimum occurs at a single group (grand coalition). }
\label{fig:correlated}
\end{figure}

\section{Conclusion} \label{sec:con}

In this paper we investigated strategic behavior of firms and coalitions in a Cournot game with production uncertainty.  We study the efficiency of Cournot competition by  characterizing a fundamental tradeoff: on one hand, market power increases as coalition size grows; on the other hand, the cost of production uncertainty is mitigated as coalition size grows.  We show there is a ``sweet spot'', in the sense
that there exist groups that are large enough to achieve the
uncertainty reduction of the grand coalition, but small enough
such that they have no significant market power.

These results have important implications for regulators in industries with production uncertainty, such as electricity markets.  In particular, our results suggest that within some limits, coalition formation among, e.g., generators of renewables may actually {\em increase} overall welfare.  We have validated these results in electricity markets in \cite{Zhang15}, where we empirically study the welfare benefits of coalitions of (a finite number of) wind power generators.

We conclude by noting two important open directions.  {\em First}, as previously noted, in many real markets (including electricity markets), the penalty for a production shortfall is not exogenous.  Rather, a firm may face a ``spot'' or secondary market, into which it can sell excess capacity, or from which it must buy additional capacity to cover a shortfall.  Modeling this two-stage market game remains an important challenge.   {\em Second}, all our results are asymptotic, though we do characterize the rate of convergence to efficiency with optimal coalition sizes.  With finitely many (potentially heterogeneous) firms, the regulator faces the potentially challenging problem of computing optimal coalitions as a benchmark.  Developing approaches for this problem remains an open issue.

%
\appendices
\section{Proof of Proposition \ref{prop:NE}}
\label{app:NE}

  \begin{proof}
  We first observe that the strategy space of each firm can be restricted to a compact set, without loss of generality.
  For any vector $\xmi$ chosen by the other firms, if $\Pr(X_i < \ymax) >0$, then firm $i$ is always strictly better off by choosing 0 than choosing a quantity larger than $\ymax$. If $\Pr(X_i <\ymax)=0$ and $\sum_{k \neq i} x_k >0$, again firm $i$ is always strictly better off by choosing 0 than choosing a production larger than $\ymax$. If $\Pr(X_i <\ymax)=0$ and $\sum_{k \neq i} x_k = 0$, then firm $i$'s best strategy is to maximize $x_i p(x_i)$. By Assumption \ref{assump:p}, the unique maximizer of $x_i p(x_i)$ is strictly larger than 0 and strictly smaller than $\ymax$. Thus, we may restrict the strategy space of a firm to $[0,\ymax]$.

  Since $p$ satisfies Assumption \ref{assump:p}, $p(\sum_i x_i) x_i$ is concave. By Assumption \ref{assump:X}, $\E[(x_i-X_i)^+]$ exists and $-\E[(x_i-X_i)^+]$ is concave in $x_i$. By additivity of concave functions, $\pi_i$ is concave for all $x_i \geq 0$.

  The game defined by $(\pi_1,\dots,\pi_N)$ with strategy spaces $([0,\ymax],\dots,[0,\ymax])$ is now a strictly concave game: each payoff $\pi_i$ is continuous and strictly concave in $\bd x$ and the strategy space of firm $i$ is a nonempty compact set. By Rosen's existence theorem (see \cite{Rosen65}), there is a unique Nash equilibrium for this game.

  Furthermore, suppose $\sum x_i > \ymax$. Then at least one of the $x_i$'s is positive. But then firm $i$ is strictly better off if it reduces $x_i$. Therefore at equilibrium, $\sum x_i \leq \ymax$.
  \end{proof}

\section{Proof of Lemma \ref{lem:aggregate}}
\label{app:aggregate}

  \newbz{
\begin{proof}{$ $}
This lemma follows from Jensen's inequality. Let $W_1,\dots,W_N$ be $N$ random variables. Let $g(\cdot)=(\cdot)^+$. Since $g$ is convex, we have
\begin{equation} \label{eqn:Jensen_W}
  \E[g(\sum_i^N \frac{1}{N} W_i)] \leq \frac{1}{N} \sum_i^N \E[g(W_i)].
\end{equation}
Since $g$ is homogenous, multiplying both side of \eqref{eqn:Jensen_W} by $N$ gives
\begin{equation*}
  \E[g(\sum_i^N W_i)] \leq \sum_i^N \E[g(W_i)].
\end{equation*}
Identify $W_i$ with $x_i-X_i$ finishes the proof.
\end{proof}
}

%

\section{Proof of Theorem \ref{thm:rate}}
\label{app:thmrate}

  \begin{proof}
  We fill first assume \eqref{eqn:rate} holds and use it to prove \eqref{eqn:rate_utility}. Then we prove \eqref{eqn:rate} to be true.

  First we observe that all groups are symmetric, therefore each of them has the same expected cost:
  \begin{align}
  \sum_{k=1}^K \E\left[ \left(x(\mS_k)-X(\mS_k)\right)^+ \right]&= K \E\left[ \left(x(\mS_k)-X(\mS_k)\right)^+ \right] \nonumber \\
  &= \E\left[ \left(K x(\mS_k)-K X(\mS_k)\right)^+ \right]. \nonumber
\end{align}
  For notational convenience, let $x_K= K x(\mS_k)$ and $X_K= K X(\mS_k)$. Note that $\E[X_K]=\mu$. Let $\hat X_K$=$X_K -\mu$ be the zero mean version. Similarly, we can write $\sum_i^N X_i = \mu+\hat X$, where $\hat X$ is zero mean.   Using these notations, the efficiency ratio $r_W$ is
  \begin{equation}
  r_W= \frac{U(x_K)-\E\left[ \left( x_K-\mu - \hat X_K \right)^+ \right]}{U(\ymax')-\E \left[ \left( \ymax'- \mu - \hat X \right)^+ \right]}.
  \end{equation}
  The following proposition is useful for bounding $r_W$.
  \begin{proposition} \label{prop:VW}
  Let $V$ and $W$ be independent continuous random variables with symmetric densities. Furthermore suppose $E[V]=E[W]=0$. Let $a$ be a positive real number. Then
  \begin{equation}
  \E \left[ \left( V+W-a\right)^+ \right] \geq \E \left[ \left( W-a\right)^+ \right].
  \end{equation}
  \end{proposition}
  The proof of Prop. \ref{prop:VW} is given later. Applying it, we have
  \begin{equation}
  \E\left[ \left( x_K-\mu - \hat X_K \right)^+ \right] \leq \E\left[ \left( x_K-\mu - \hat X- \hat X_K \right)^+ \right],
  \end{equation}
  where $X$ is independent of $\hat X_K$ and the expectation is now taken with respect to both $\hat X$ and $\hat X_K$.
  Therefore to lower bound $r_W$, it suffices to lower bound
  \begin{align}
  \Delta_U= &U(\ymax')-\E \left[ \left( \ymax'- \mu - \hat X \right)^+ \right]- U(x_K) \nonumber \\
  &-\E\left[ \left( x_K-\mu - \hat X -\hat X_K \right)^+ \right]. \label{eqn:r_diff_X}
\end{align}

  Inserting an independent copy of $\hat X$ into the expectation allows us to analyze \eqref{eqn:r_diff_X} using a Taylor expansion. Let $\delta=\ymax'-x_K$,
  \begin{align}
  U(x_K)&=U(\ymax'-\delta) \nonumber \\
  &\approx U(\ymax')- \delta U'(\ymax') + \frac{\delta^2}{2} U''(\ymax'), \label{eqn:U_taylor}
\end{align}
  where we neglect the higher order terms in the Taylor expansion. Let $g(y)=\E \left[ \left( y- \mu - \hat X \right)^+ \right]$. To obtain a Taylor expansion of $\E\left[ \left( x_K-\mu - \hat X -\hat X_K \right)^+ \right]$ around $\ymax'$, we use conditional expectation by successively conditioning on $\hat X_K$ and $\hat X$:
  \begin{align}
  & \E\left[ \left( x_K-\mu - \hat X -\hat X_K \right)^+ \right] \nonumber \\
  &= \E_{\hat{X}} \left[ \E_{\hat{X}_K} \left[ g(\ymax'-\delta-\hat X_K) | \hat{X} \right] \right] \nonumber \\
  &\approx \E_{\hat X} \left[ \E_{\hat{X}_K} \left[ g(\ymax') - (\delta+\hat X_K) g'(\ymax') + \frac{(\delta+\hat X_K)^2}{2} g''(\ymax') \right] \right] \nonumber \\
  &\stackrel{(a)}{=} g(\ymax') - \delta g'(\ymax') + \frac{\delta^2}{2} g''(\ymax')+\frac{\E[\hat X_K^2]}{2} g''(\ymax'), \label{eqn:E_taylor}
  \end{align}
  where again the higher order terms are neglected and $(a)$ follows from $\E[\hat X_K]=0$ and the independence between $\hat{X}$ and $\hat{X}_K$.

  Substituting \eqref{eqn:U_taylor} and \eqref{eqn:E_taylor} into \eqref{eqn:r_diff_X} yields:
  \begin{align}
  \Delta_U & \approx U(\ymax')- g(\ymax') - \left\{ U(\ymax')- \delta U'(\ymax') + \frac{\delta^2}{2} U''(\ymax') \right\} \nonumber \\
  & \quad + \left\{ g(\ymax') - \delta g'(\ymax') + \frac{\delta^2}{2} g''(\ymax')+\frac{\E[\hat X_K^2]}{2} g''(\ymax') \right \} \nonumber \\
  & = \delta \left(U'(\ymax')-g'(\ymax')\right) + \frac{\delta^2}{2} \left(g''(\ymax')-U''(\ymax') \right) \nonumber \\
  & \quad +\frac{\E[\hat X_K^2]}{2} g''(\ymax') \nonumber \\
  & \stackrel{(a)}{=} \frac{\delta^2}{2} \left(g''(\ymax')-U''(\ymax') \right) +\frac{\E[\hat X_K^2]}{2} g''(\ymax'), \label{eqn:g_U}
  \end{align}
  where $(a)$ follows from first order optimality conditions and fact that $\ymax'$ maximizes $U(y)-g(y)$. We now derive more explicit formulas for the constants in \eqref{eqn:g_U}:
  \begin{align*}
  g''(\ymax') &= \frac{d^2}{dy^2} \E[(y-\mu-\hat X)^+] |_{\ymax'} \\
  &= \frac{d}{dy} \E[ \mathds{1}(y-\mu-\hat X)^+] \\
  &= \frac{d}{dy} \Pr(\hat X \leq y-\mu) \\
  &= \hat f (y-\mu),
  \end{align*}
  where $\hat f$ is the pdf of $\hat X$; and $U''(\ymax') = p'(\ymax)$.

  By Assumption \ref{assump:X}, $\hat f$ exists; by Assumption \ref{assump:p}, $p'(\ymax')$ is negative. Therefore both $g''(\ymax')$ and $-U''(\ymax')$ are positive constants.

  The deviation $\delta^2$ can be calculated from \eqref{eqn:rate}. By \eqref{eqn:rate}, $\delta$ scales as $O(\frac{1}{K})+O(\frac{K}{N})$ and $\delta^2$ scales as
  \begin{equation*}
  O\left(\frac{1}{K^2} \right)+O \left(\frac{1}{N} \right)+O\left(\frac{K^2}{N^2}\right).
  \end{equation*}
  Since $\hat X_K$ is zero mean,
  \begin{align*}
  \E[\hat X_K^2]&=\mbox{Var}(X_K)=\mbox{Var}\left( K \sum_{i=1}^{N/K} X_i \right)  \\
  &= K^2 \frac{1}{N K} \sigma^2 =\frac{K}{N} \sigma^2=O\left(\frac{K}{N}\right).
\end{align*}

  Combining the above arguments, we have
  \begin{align*}
  & U(\ymax')-g(\ymax')-(U(x_K)-\E[(y-\mu-\hat{X}-\hat X_K)^+]) \\
   &= O\left(\frac{1}{K^2} \right)+O \left(\frac{1}{N} \right)+O\left(\frac{K^2}{N^2}\right)+ O\left(\frac{K}{N}\right) \\
  &= O\left(\frac{1}{K^2} \right)+ O\left(\frac{K}{N}\right).
  \end{align*}
  By Prop. \ref{prop:VW}, $\E[(y-\mu-\hat{X}-\hat X_K)^+] \geq \E[(y-\mu-\hat X_K)^+]$ and we have the efficiency ratio scales as
  \begin{equation}
  r_W=1- O\left(\frac{1}{K^2} \right)- O\left(\frac{K}{N}\right).
  \end{equation}

  We now prove \eqref{eqn:rate}. The strategy of the proof proceeds in two steps. First we consider a deterministic game with $K$ players, and  bound the difference between the Nash equilibrium of the game $(\pi_1,\dots,\pi_K)$ and the Nash equilibrium of the deterministic game.  Then we bound the difference between the latter and $\ymax$.

  Let $(\mS_1,\dots,\mS_K)$ be an equal sized partition of $(1,\dots,N)$, with each coalition of  size $N/K$. Consider a deterministic game defined by $(\opi_1,\dots,\opi_K)$ where
  \begin{equation}
  \opi_k = p\left(\sum_{m=1}^K \ox(\mS_m) \right) \ox(\mS_k).
  \end{equation}
   Let $(\ox(\mS_1),\dots,\ox (\mS_K))$ be a Nash equilibrium of this game. By symmetry, $\ox(\mS_k)$ are of the same value. We denote this value by $\ox_K$; it satisfies:
  \begin{equation} \label{eqn:ox}
  p(K \ox_K)+p'(K \ox_K) \ox_K=0.
  \end{equation}

  Let $(x(\mS_1),\dots,x(\mS_K))$ be a Nash equilibrium of the game $(\pi_1,\dots,\pi_K)$. Again, this equilibrium is symmetric. Denote the common production level of each coalition by $x_K$. Similarly, denote $X(\mS_k)$ by $X_K$.   Since $E[(x_K-X_K)^+]$ is increasing in $x_K$, we have $x_K \leq \ox_K$. Therefore we can rewrite $x_K$ as $\ox_K-\Del$, for some $\Del \geq 0$ that solves:
  \begin{align}
  & p(K (\ox_K-\Del))+p'(K (\ox_K-\Del)) (\ox_K-\Del) \nonumber \\
  & \quad -\Pr(X_K \leq \ox_K-\Del)=0. \label{eqn:xD}
\end{align}

  Subtracting \eqref{eqn:ox} from \eqref{eqn:xD}, we have
  \begin{align*}
  & [p(K\ox_K -K\Del)-p(K \ox_K)] + [p'(K\ox_K-K\Del)(\ox_K-\Del) \\
  & \quad -p'(K\ox_K)\ox_K]-\Pr(X_K\leq \ox_K-\Del) =0.
\end{align*}
  Since $p$ is concave and decreasing, $p'(K\ox_K)<p'(K\ox_K-K\Del) < 0$. Also since $\ox_K$ is positive, the term in the second set of brackets, $p'(K\ox_K-K\Del)(\ox_K-\Del)-p'(K\ox_K)\ox_K$ is greater than or equal to zero. Also since $\Del$ is positive, $\Pr(X_K \leq \ox_K-\Del) \leq \Pr(X_k \leq \ox_K)$. Therefore:
  \begin{equation} \label{eqn:1}
  p(K\ox_K -K\Del)-p(K \ox_K) -\Pr(X_K\leq \ox_K) \leq 0.
  \end{equation}
  Since $p$ is concave and decreasing,
  \begin{equation} \label{eqn:2}
  p(K\ox_K -K\Del)-p(K \ox_K) \geq K \Del (-p'(0)).
  \end{equation}
  Combining \eqref{eqn:1} and \eqref{eqn:2}, we obtain a bound on $K \Del$, or $\sum_k \ox(\mS_k) -\sum_k x(\mS_k)$, as
  \begin{equation} \label{eqn:Kdelta}
  K \Del \leq \frac{\Pr(X_K \leq \ox_K)}{-p'(0)} \leq \frac{\Pr(X_K \leq \ymax/K)}{-p'(0)}.
  \end{equation}
  By assumption, $\mu > y_{\max}$; applying Chebyshev's inequality gives $\Pr(X_K \leq \ymax/K) = O(K/N)$. Therefore $K \Del =O(K/N)$.

  Now we bound the gap between $K \ox_K$ and $y_{\max}$. Let $\delta=y_{\max}-K \ox_K$.  Substituting into \eqref{eqn:ox} gives:
  \begin{equation}
  p(\ymax-\delta)+p'(\ymax-\delta)\left(\frac{\ymax-\delta}{K}\right)=0.
  \end{equation}
  Since $p$ is decreasing and $\delta$ is positive,
  \begin{equation}
  p(\ymax-\delta)+p'(\ymax-\delta)\left(\frac{\ymax}{K}\right) \leq 0.
  \end{equation}
  Rearranging yields
  \begin{align}
  p(\ymax-\delta) & \leq -p'(\ymax-\delta)\left(\frac{\ymax}{K}\right) \nonumber \\
  & \stackrel{(*)}{\leq } - p'(\ymax)\left(\frac{\ymax}{K}\right) \label{eqn:pdelta_1},
  \end{align}
  where $(*)$ follows from the fact $p'$ is negative and decreasing ($p$ is decreasing and concave).
  Since $p$ is concave and decreasing, and $p(\ymax)=0$,
  \begin{equation} \label{eqn:pdelta_2}
  p(\ymax-\delta) \geq \delta \frac{p(0)}{\ymax}.
  \end{equation}
  See Figure \ref{fig:pdelta} for an illustration of \eqref{eqn:pdelta_2}. Combining \eqref{eqn:pdelta_1} and \eqref{eqn:pdelta_2} gives
  \begin{equation}
  \delta \frac{p(0)}{\ymax} \leq  -p'(\ymax)\frac{\ymax}{K},
  \end{equation}
  or
  \begin{equation}
  \delta \leq \frac{1}{K} \frac{-p'(\ymax) \ymax^2}{p(0)}.
  \end{equation}
  Therefore $\delta= O(1/K)$.

  \begin{figure}[ht]
  \centering
  \psfrag{p(0)}{$p(0)$}
  \psfrag{p(y)}{$p(\ymax-\delta)$}
  \psfrag{dp}{\large $\frac{\delta p(0)}{\ymax}$}
  \psfrag{ymax-d}{$\ymax-\delta$}
  \psfrag{ymax}{$\ymax$}
  \includegraphics[scale=0.3]{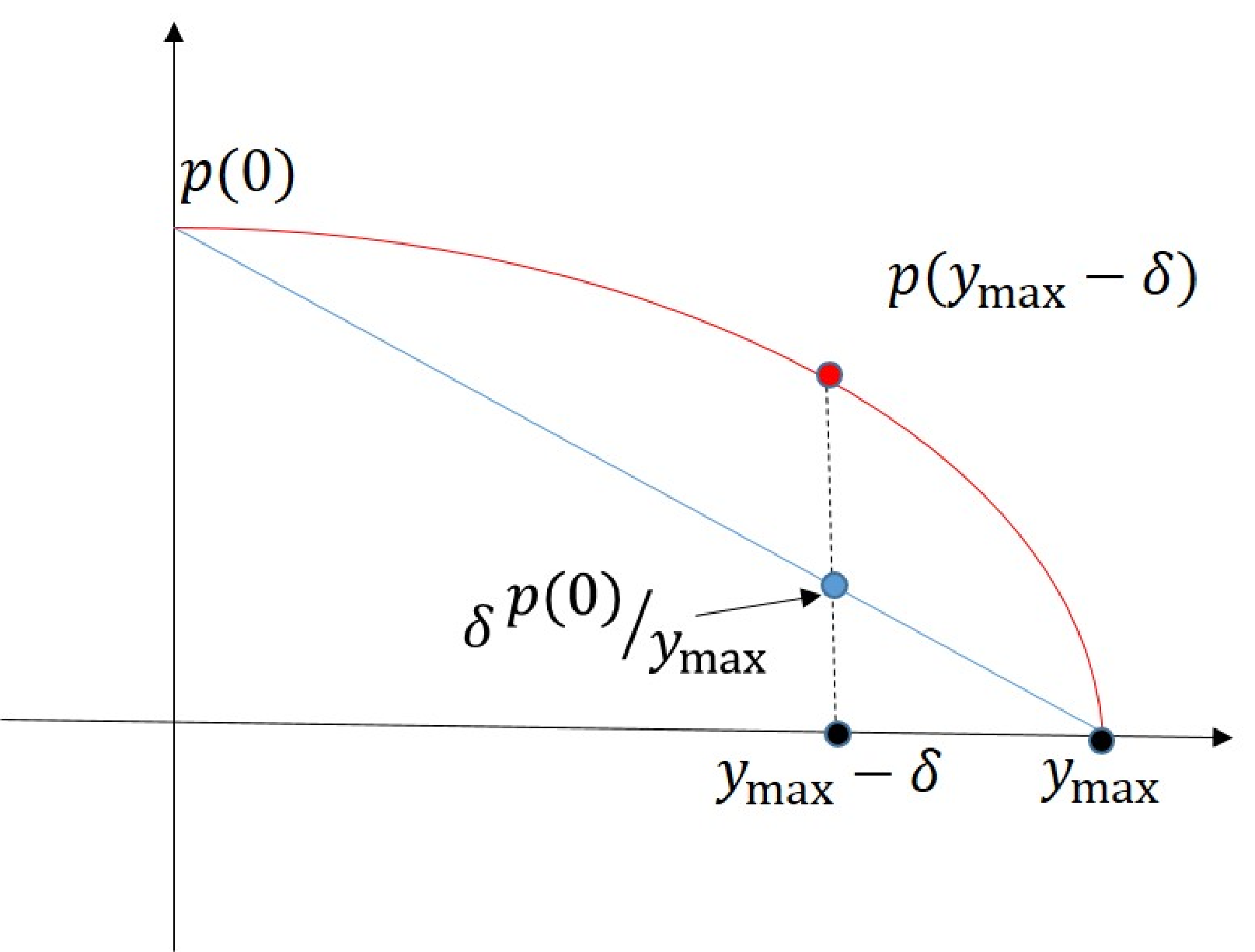}
  \caption{Graphical demonstration of \eqref{eqn:pdelta_1}.}
  \label{fig:pdelta}
  \end{figure}

  Combining the two parts of the proof,
  \begin{equation}
  \ymax-\sum_k x(\mS_k)=O\left(\frac{1}{K}\right)+O\left(\frac{K}{N}\right).
  \end{equation}
  Dividing both sides by $\ymax$ gives the desired result:
  \begin{equation}
  r_O=1-O\left(\frac{1}{K}\right)-O\left(\frac{K}{N}\right).
  \end{equation}

  Finally we prove Prop. \ref{prop:VW}. Let $f_V$ and $f_W$ be the pdf of $V$ and $W$, respectively. Then
  \begin{align*}
  & \E\left[ (V+W-a)^+ \right]-\E\left[ (W-a)^+ \right]\\
 &= \int_{0}^{\infty} \int_{\mu-v}^{\mu} (u+v-\mu) f_W(w) f_V(v) dw dv \\
  & + \int_{-\infty}^0 \int_{\mu}^{\mu+x} (\mu-w) f_W(w) f_V(v) dw dv \\
  & + \int_{\mu}^{\infty} \int_{\mu-w}^{w-\mu} v f_V(v) f_W(w) dv dw \\
  & + \int_{0}^{infty} \int_{\mu}^{\mu+x} w f_W(w) f_V(v) dw dv.
  \end{align*}
  By symmetry of $f_V$,
  \begin{equation*}
  \int_{\mu}^{\infty} \int_{\mu-w}^{w-\mu} v f_V(v) f_W(w) dv dw =0.
  \end{equation*}
  By symmetry of $f_W$,
  \begin{align*}
  &\int_{-\infty}^0 \int_{\mu}^{\mu+x} -w f_W(w) f_V(v) dw dv \\
  &+\int_{0}^{infty} \int_{\mu}^{\mu+x} w f_W(w) f_V(v) dw dv =0.
\end{align*}
  Therefore
  \begin{align*}
  & E\left[ (V+W-a)^+ \right]-\E\left[ (W-a)^+ \right] \\
   &= \int_{0}^{\infty} \int_{\mu-v}^{\mu} (u+v-\mu) f_W(w) f_V(v) dw dv \\
  & + \int_{-\infty}^0 \int_{\mu}^{\mu+x} \mu f_W(w) f_V(v) dw dv \\
  & \geq 0.
  \end{align*}
  \end{proof}

\section{Proof of Corollary \ref{cor:rate}}
\label{app:corrate}

  \begin{proof}{$ $}
  Let $(\mS_1,\dots,\mS_K)$ be a equal sized partition of $(1,\dots,N)$ (each of size $N/K$). Consider the deterministic game $(\opi_1,\dots,\opi_K)$ played by the groups. Let $(\ox(\mS_1),\dots,\ox (\mS_K))$ be a Nash equilibrium of this game. Let $(x(\mS_1),\dots,x(\mS_K))$ be a Nash equilibrium of the game $(\pi_1,\dots,\pi_K)$.   Since $E[f(x_K-X_K)]$ is increasing in $x (\mS_k)$, $x_K \leq \ox_K$, we can rewrite $x_K$ as $\ox_K-\Del$ with $\Del \geq 0$. Following the proof of Theorem \ref{thm:rate}, we obtain (in place of \eqref{eqn:Kdelta}):
  \begin{equation}
  K \Del \leq \frac{\E[f'(x_K-X_K)]}{-p'(0)}.
  \end{equation}
  Since $f'$ is bounded, $\E[f'(x_K-X_k)] \leq B \Pr(x_K-X_K \geq 0)$ for some $B$. Therefore $\Del$ scales as $O(K/N)$. Following the same steps as in the proof of Theorem \ref{thm:rate}, the efficiency ratio $r_O$ scales as $1-O\left(\frac{1}{K}\right)-O\left(\frac{K}{N}\right)$.  An analogous proposition to Prop. \ref{prop:VW} can be derived for a convex and increasing $f$, and a similar Taylor expansion argument can be used to bound $r_W$ based on $r_O$.
  \end{proof}

\section{Proof of Corollary \ref{cor:weak}}
\label{app:weak}

  \begin{proof}{$ $}
  Consider a group $\mS_k$. It suffices to show that $\Pr (\sum_{i \in \mS_k} X_i \leq \ymax/K)$ is $O(K/N)$ since the rest of the proof proceeds exactly as that of Theorem \ref{thm:rate}.

  Equation \eqref{eqn:weak_cor} implies that
  \begin{equation}
  \Pr \left(\sum_{i \in \mS_k} X_i \leq \ymax/K\right) \leq \frac{c}{(\mu-\ymax)^2} \frac{K}{N}=O\left(\frac{K}{N}\right).
  \end{equation}
  (See, e.g., \cite{DeGroot10}, for this standard result.)
  \end{proof}

\section{Proof of Theorem \ref{thm:rate_Z}}
\label{app:rate_Z}

\begin{proof}{$ $}
This proof follows a similar path to that of Theorem \ref{thm:rate}. We assume \eqref{eqn:rate_Z} to be true and prove \eqref{eqn:rate_Z_utility} holds. Then we prove \eqref{eqn:rate_Z}.

By symmetry, $x(\mS_k)$ are equal for all $k$ and we denote $\sum_{k=1}^K x(\mS_K)$ as $x_K$ and $x_K=K x(\mS_1)$. Let $X_K$ denote $K X(\mS_k)$. By Assumption \ref{assump:Z}, the random variable $X_K$ can be related to $Z$ as
\begin{equation*}
X_K= Z+ \hat{X}_K
\end{equation*}
where $\hat{X}_K$ has the same distribution as
\begin{equation*}
\hat{X}_K=K \sum_{i=1}^{N/K} \hat{X}_i.
\end{equation*}
 Therefore
\begin{align*}
& U \left( \sum_{k=1}^K x(\mS_k) \right) - \sum_{k=1}^K \E\left[ \left(x(\mS_k)-X(\mS_k)\right)^+ \right] \\
& = U \left( x_K \right) - \E\left[ \left( x_K-Z-\hat X_K \right)^+ \right].
\end{align*}
Let $\hat X=\sum_{i=1}^N \hat{X}_i$. Then
\begin{equation*}
r_W= \frac{U \left( x_K \right) - \E\left[ \left( x_K-Z-\hat X_K \right)^+ \right]}{U \left(  \ymax' \right) - \E\left[ \left( \ymax'-Z-\hat X \right)^+ \right]}.
\end{equation*}
Applying Prop. \ref{prop:VW}, $\E\left[ \left( x_K-Z-\hat X_K \right)^+ \right] \leq \E\left[ \left( x_K-Z-\hat X-\hat X_K \right)^+ \right]$ where $\hat X$ is independent of $\hat X_K$. By exactly the same argument as in the proof of Theorem \ref{thm:rate}, by assuming \eqref{eqn:rate_Z}, we can show
\begin{equation*}
r_W=1- O\left(\frac{1}{K^2} \right)- O\left(\frac{K}{N}\right).
\end{equation*}

 We prove \eqref{eqn:rate_Z} holds by first defining the following intermediate game. Define the payoff function:
\begin{equation} \label{eqn:opi_Z}
\opi_k=p\left(\sum_k x (\mS_k)\right) x (\mS_k) - \E[(x (\mS_k) - \frac{1}{K} (Z-\mu))^+].
\end{equation}
Let $(\ox(\mS_1),\dots,\ox (\mS_K))$ be a Nash equilibrium of the game defined by $(\opi_1,\dots,\opi_K)$ (given in \eqref{eqn:opi_Z}). By symmetry, $\ox(\mS_k)$ are the same for all $k$. We denote this value by $\ox_K$; it satisfies:
\begin{equation} \label{eqn:ox_Z}
p(K \ox_K)+p'(K \ox_K) \ox_K-\Pr(\ox_K \geq \frac{1}{K} (Z+\mu))=0.
\end{equation}

Let $(x(\mS_1),\dots,x(\mS_K))$ be a Nash equilibrium of the game $(\pi_1,\dots,\pi_K)$. Again by symmetry, all coalitions choose the same production level. Denote this value by $x_K$. Similarly, denote $X(\mS_k)$ as $X_K$.   Since $E[(x_K-X_K)^+]$ is increasing in $x (\mS_k)$, $x_K \leq \ox_K$. Therefore we can rewrite $x_K$ as $\ox_K-\Del$, for some $\Del \geq 0$ that solves:
\begin{align}
& p(K (\ox_K-\Del))+p'(K (\ox_K-\Del)) (\ox_K-\Del) \nonumber \\
& \quad -\Pr(\ox_K-\Del \geq X_K)=0. \label{eqn:xD_1}
\end{align}
By definition of $X_K$, \eqref{eqn:xD_1} can be written as:
\begin{align}
& p(K (\ox_K-\Del))+p'(K (\ox_K-\Del)) (\ox_K-\Del) \nonumber \\
& \quad -\Pr \left(\ox_K - \Del \geq \frac{1}{K} Z + \sum_{i=1}^{N/K} \hat{X}_i\right)=0. \label{eqn:xD_Z}
\end{align}
Subtracting \eqref{eqn:ox_Z} from \eqref{eqn:xD_Z} and following the steps of the proof in Theorem \ref{thm:rate}, we have
\begin{align}
K \Delta (-p'(0)) & \leq \Pr \left(\ox_K - \Del \geq \frac{1}{K} Z + \sum_{i=1}^{N/K} \hat{X}_i\right) \nonumber \\
& \quad -\Pr\left(\ox_K \geq \frac{1}{K} (Z+\mu)\right) \nonumber \\
          & =  \Pr \left(K \ox_K - K\Del \geq Z + K \sum_{i=1}^{N/K} \hat{X}_i\right) \nonumber \\
          & \quad -\Pr(K \ox_K \geq Z+\mu). \nonumber
\end{align}
Since $\Del\geq 0$,
\begin{align*}
K \Delta (-p'(0))&\leq \Pr \left(K \ox_K \geq Z + K \sum_{i=1}^{N/K} \hat{X}_i \right) \\
& \quad -\Pr(K \ox_K \geq Z+\mu).
\end{align*}
It is convenient to associate the mean $\mu$ with $Z$ rather than $\hat{X}_i$. Define $Z'=Z+\mu$ and $\hat{X}_K'=K \sum_{i=1}^{N/K} \hat{X}_i$. With this change of variables, we need to bound
\begin{equation*}
\Pr (K \ox_K \geq Z' + \hat{X}_K')-\Pr(K \ox_K \geq Z').
\end{equation*}
By conditional probability and the independence of $Z'$ and $\hat{X}_K'$,
\begin{align*}
& \Pr (K \ox_K \geq Z' + \hat{X}_K')-\Pr(K \ox_K \geq Z') \\
= & \Pr(K \ox_K \geq Z' + \hat{X}_K'|\hat{X}_K'\leq 0) \Pr (\hat{X}_K' \leq 0) \\
& + \Pr(K \ox_K \geq Z' + \hat{X}_K'|\hat{X}_K'> 0) \Pr (\hat{X}_K' > 0) \\
& - \Pr(K \ox_K \geq Z') \\
= & \{\Pr(K \ox_K \geq Z') \\
& +\Pr(K \ox_K < Z', K \ox_K \geq Z' + \hat{X}_K'|\hat{X}_K'\leq 0) \} \Pr (\hat{X}_K' \leq 0) \\
&+ \Pr(K \ox_K \geq Z' + \hat{X}_K'|\hat{X}_K'> 0) \Pr (\hat{X}_K' > 0) \\
 & -  \Pr(K \ox_K \geq Z') \\
\leq &  \{\Pr(K \ox_K \geq Z') \\
& +\Pr(K \ox_K < Z', K \ox_K \geq Z' + \hat{X}_K'|\hat{X}_K'\leq 0) \} \Pr (\hat{X}_K' \leq 0) \\
& + \Pr(K \ox_K \geq Z') \Pr (\hat{X}_K' > 0) -  \Pr(K \ox_K \geq Z') \\
=& \Pr(K \ox_K < Z', K \ox_K \geq Z' + \hat{X}_K'|\hat{X}_K'\leq 0) \Pr (\hat{X}_K' \leq 0) \\
\leq & \Pr(K \ox_K < Z', K \ox_K \geq Z' + \hat{X}_K'|\hat{X}_K'\leq 0) \Pr (\hat{X}_K' \leq 0)\\
& + \Pr(K \ox_K < Z', K \ox_K \geq Z' \\
 & + \hat{X}_K'|\hat{X}_K'>0) \Pr (\hat{X}_K' > 0) \\
= & \Pr (K \ox_K < Z', K \ox_K \geq Z'+\hat{X}_K').
\end{align*}
By assumption $Z'$ is a continuous random variable with a bounded density function, and we denote its density by $f_{Z'}$. Let $f_{Z,\max}$ denote the maximum value of $f_{Z'}$. Also, denote the density of $\hat{X}_K'$ by $f_{X'}$. Then
\begin{align*}
&   \Pr (K \ox_K < Z', K \ox_K \geq Z'+\hat{X}_K') \\
 &= \int_{-\infty}^{0} \int_{K \ox_K}^{K \ox_K-x} f_{Z'} (z) f_{X'}(x) dz dx\\
  & \stackrel{(a)}{\leq } f_{Z,\max} \int_{-\infty}^{0} (-x) f_{X'} dx \\
  & \stackrel{(b)}{=} f_{Z,\max} \int_{0}^{\infty} x f_{X'} dx \\
  & \stackrel{(c)}{\leq} f_{Z,\max} \cdot \mbox{const} \cdot \mbox{var} (\hat{X}'),
\end{align*}
where $(a)$ follows from boundedness of $f_Z$, $(b)$ follows from the symmetry in $\hat{X}'$ and $(c)$ follows from the fact that $\hat{X}_K'$ has bounded variance. Therefore $\sum \ox_K -\sum x_K = O(\frac{K}{N})$.

Now we bound the difference between $\sum \ox_K$ and $\ymax'$. By definition $\ymax'$ solves
\begin{equation} \label{eqn:ymax}
p(\ymax')-\Pr(\ymax'-Z-\mu >0) =0.
\end{equation}
Write $K \ox_K$ as $\ymax'-\delta$ for some $\delta>0$, and subtracting \eqref{eqn:ymax} from \eqref{eqn:ox_Z} gives
\begin{align*}
& \{p(\ymax'-\delta)-p(\ymax')\} + p'(\ymax'-\delta)\frac{\ymax'-\delta}{K} \\
& \quad -\{\Pr(\ymax'-\delta-Z-\mu >0)-\Pr(\ymax'-Z-\mu >0)\}=0.
\end{align*}
Since $p$ is decreasing and concave, and $\delta >0$, we have $p'(\ymax'-\delta) (-\delta) >0$  $ p'(\ymax') \leq p'(\ymax'-\delta) \leq 0$, and $\{\Pr(\ymax'-\delta-Z-\mu >0)-\Pr(\ymax'-Z-\mu >0)\} <0$. Therefore
\begin{equation*}
p(\ymax'-\delta)-p(\ymax') \leq -p'(\ymax')\frac{\ymax'}{K}.
\end{equation*}
Since $p$ is concave and decreasing, $p(\ymax'-\delta)-p(\ymax') \geq \delta \frac{p(0)-p(\ymax')}{\ymax'}$ (see Figure \ref{fig:pdelta}). Thus:
\begin{equation*}
\delta  \leq -p'(\ymax')\frac{\ymax'^2}{K (p(0)-p(\ymax'))}.
\end{equation*}
Combining with the first half of the proof gives the desired result.
\end{proof}

\section{Proof of Proposition \ref{prop:NE_demand}}
\label{app:NE_demand}

  \begin{proof}{$ $}
  By Assumption \ref{assump:X}, $\E[\min(x_i,X_i)]$ exists and is concave (expectation over point-wise minimums). By Assumption \ref{assump:p_demand}, $x_i \hat{p}(\sum x_i)$ is convex. Thus $T_i$ is concave for all $x_i >0$.

  Since $\E[\min(x_i,X_i)]$ increasing at most linearly with $x_i$ and $\hat{p}(\sum x_i)$ is convex and increasing without bound, there exists constants $B_i$ such that $\E[\min(B_i,X_i)]-B_i \hat{p}(B_i) <0$. For firm $i$, there is no incentive to choose $x_i$ greater than $B_i$ since it would be strictly better off choosing $x_i=0$. Therefore we may restrict the strategy space of firm $i$ to $[0,B_i]$. The game defined by $(T_1,\dots,T_N)$ with strategy space $([0,B_1],\dots,[0,B_N])$ is now a symmetric strictly-concave game: each payoff $T_i$ is concave and continuous in $\bd x$ and the strategy spaces are nonempty and compact. Therefore a unique Nash equilibrium exists.
  \end{proof}

\section{Proof of Lemma \ref{lem:aggregate_demand}}
\label{app:aggregate_demand}

  \begin{proof}{$ $}
  It suffices to prove that for any set of real numbers $a,b,A,B$ that
  \begin{equation} \label{eqn:minAB}
  \min(a+b,A+B)\geq \min(a,A)+\min(b,B).
  \end{equation}
  Lemma \ref{lem:aggregate_demand} follows from repeated application of \eqref{eqn:minAB}.

  To show \eqref{eqn:minAB} holds, we consider several cases. Let $\mathcal{A}$ be the condition that $\{a \leq A, b \leq B\}$ or $\{a \geq A, b \geq B\}$. Under $\mathcal{A}$,
  \begin{equation} \label{eqn:A}
  \min(a+b,A+B)= \min(a,A)+\min(b,B).
  \end{equation}
  The complement of $\mathcal{A}$ consists of two disjoint cases: $\{a>A, b < B\}$ and $\{a < A, b >B \}$
  If $\{a >A, b<B\}$,
  \begin{equation*}
  \min(a+b,A+B)=a+b > A+b =\min(a,A)+\min(b,B).
  \end{equation*}
  Similarly, if $\{a < A, b >B \}$
  \begin{equation} \label{eqn:barA}
  \min(a+b,A+B) > \min(a,A)+\min(b,B)
  \end{equation}
  under condition $\bar{\mathcal{A}}$. Combining \eqref{eqn:A} and \eqref{eqn:barA} yields \eqref{eqn:minAB}.
  \end{proof}

\section{Proof of Theorem \ref{thm:oligopsony}}
\label{app:oligopsony}
\begin{proof}{$ $}
We prove the four claims one by one.

\textit{Proof of Claim 1.} By Assumption \ref{assump:p_demand}, $\hat{p}$ is continuous, differentiable, convex and increasing, then $p$ is continuous, differentiable, concave and decreasing. Since $\hat{p}(y) \rightarrow \infty$, ${p} \rightarrow -\infty$; and since $\hat{p}(0)<1, \hat{p}'(0)>0$, ${p}(0)>0$ and ${p}'(0) <0$.

\textit{Proof of Claim 2.} By definition of $U$,
\begin{align}
U\left(\sum_i x_i \right) &= \int_0^{\sum_i x_i} p(z)dz \nonumber \\
& = \int_0^{\sum_i x_i} \left( 1-\hat{p}(z) \right) dz \nonumber \\
&= \sum_i x_i - C\left(\sum_i x_i \right).  \label{eqn:U_C}
\end{align}
Also
\begin{align}
& \sum_i x_i - E\left[\left(\sum_i x_i-\sum_i X_i\right)^+\right] \nonumber \\
 & = \E\left[\sum_i x_i -\left(\sum_i x_i-\sum_i X_i\right)^+\right] \nonumber \\
&= \begin{cases}
\E \left[ \sum_i x_i - \sum_i x_i + \sum_i X_i \right] &\mbox{ if } \sum_i x_i > \sum_i X_i \\
\E \left[ \sum_i x_i \right] &\mbox{ if }  \sum_i x_i \leq \sum_i X_i
\end{cases} \nonumber \\
&= \E\left[\min\left(\sum_i x_i, \sum_i X_i \right) \right]. \label{eqn:min}
\end{align}
Combining \eqref{eqn:U_C} and \eqref{eqn:min} yields \eqref{eqn:eq_social}. The derivation of \eqref{eqn:eq_i} follows in similar fashion as well.

\textit{Proof of Claim 3.} We already established \eqref{eqn:eq_social} holds. It remains to prove that any solution $\bd x$ to \eqref{eqn:social_demand} would satisfy $\sum_i x_i < \ymax$, where $\ymax$ is the unique solution to $p(y)=0$. By our definition, $p(y)=1-\hat{p}(y)$; therefore $\ymax$ is the unique solution to $\hat{p}(\ymax)=1$. Now suppose $\hat{p}(\sum_i x_i) >1$, then looking at \eqref{eqn:social_y},
\begin{equation}
\frac{d}{d y} \left.\left(\E\left[min(y,\sum_i X_i)\right]\right)\right\rvert_{y=\sum_i x_i} \leq 1
\end{equation} and
\begin{equation}
\frac{d}{d y} C(y)\rvert_{y=\sum_i x_i} = \hat{p}(\sum_i x_i) > 1.
\end{equation}
Thus $\bd x$ is not an optimal solution to \eqref{eqn:social_demand}, or equivalently, any solution of \eqref{eqn:social_demand} must satisfy $\sum_i x_i < \ymax$.

\textit{Proof of Claim 4.} The proof of this claim proceeds very similarly to the proof of Claim 3. Since \eqref{eqn:eq_i} holds, we only need to show that in both the game $(T_1,\dots,T_N)$ and the game $(\pi_1,\dots,\pi_N)$, no firm would bid $x_i > \ymax$ in a Nash equilibrium. Following the previous proof, it is straightforward to show that if a firm chooses $x_i > \ymax$, then it would have negative payoff regardless of the actions of the other firms. Thus no firm would choose such an $x_i$.
\end{proof}

%



\bibliographystyle{IEEEtran} 
\bibliography{mybib}

\end{document}